\theoremstyle{plain} 
\newtheorem{theorem}{Theorem}
\newtheorem{lemma}[theorem]{Lemma}
\theoremstyle{definition}
\newtheorem{definition}{Definition}
\newtheorem{remark}{Remark}
\newtheorem{algorithm}{Algorithm}
\newtheorem{routine}{Routine}
\newcommand{\bfx}{\mathbf{x}} 
\newcommand{\bfy}{\mathbf{y}} 
\newcommand{\yobs}{\mathbf{y}_{obs}} 
\newcommand{\bfX}{\mathbf{X}}
\newcommand{\bfY}{\mathbf{Y}}
\newcommand{\bfW}{\mathbf{W}}
\newcommand{\bfN}{\mathbf{N}}
\newcommand{\bfV}{\mathbf{V}}
\newcommand{\bfA}{\mathbf{A}}
\newcommand{\bfH}{\mathbf{H}}
\newcommand{\bfF}{\mathbf{F}}
\newcommand{\bfg}{\mathbf{g}}
\newcommand{\calX}{\mathcal{X}}
\newcommand{\calB}{\mathcal{B}}
\newcommand{\calD}{\mathcal{D}}
\newcommand{\R}{\mathbb{R}}
\newcommand{\bmtheta}{\bm{\theta}}
\newcommand{\bmTheta}{\bm{\Theta}}
\newcommand{\bmnu}{\bm{\nu}}
\newcommand{\bmmu}{\bm{\mu}}
\newcommand{\bmomega}{\bm{\omega}}
\newcommand{\bmphi}{\bm{\phi}}
\newcommand{\I}{\bm{1}}
\newcommand{\defi}{\mathop{=}\limits^{\Delta}}  
\newcommand{\toas}{\mathop{\longrightarrow}\limits^{a.s.}}  
\newcommand{\dnorm}{\mathcal{N}}
\newcommand{\amax}{\mbox{argmax}}
\begin{document}

\title{{\bf Multi-Domain Sampling With Applications to Structural Inference of Bayesian Networks}
\thanks{To appear in {\it Journal of the American Statistical Association}.}
}
\author{Qing Zhou\thanks{
Qing Zhou is Assistant Professor, Department of Statistics, University of California, Los Angeles, CA 90095
(Email: zhou@stat.ucla.edu).
This work was supported in part by NSF grant DMS-0805491 and NSF CAREER Award DMS-1055286.
The author thanks the editor, the associate editor, and the two referees for helpful comments
and suggestions which significantly improved the manuscript. 
}}
\date{}
\maketitle

\begin{abstract}

When a posterior distribution has multiple modes, unconditional expectations, such as the posterior mean,
may not offer informative summaries of the distribution. Motivated by this problem,
we propose to decompose
the sample space of a multimodal distribution into domains of attraction of local modes.
Domain-based representations are defined to summarize the probability masses of and
conditional expectations on domains of attraction, which are much more informative
than the mean and other unconditional expectations. A computational method, the multi-domain sampler, 
is developed to construct domain-based representations for 
an arbitrary multimodal distribution. 
The multi-domain sampler is applied to structural learning of protein-signaling networks from 
high-throughput single-cell data, where a signaling network is modeled as a causal Bayesian network. 
Not only does our method provide a detailed landscape
of the posterior distribution but also improves the accuracy and the predictive power of
estimated networks.

Key words: Domain-based representation; Multimodal distribution; Monte Carlo; 
Network structure; Protein-signaling network; Wang-Landau algorithm.

\end{abstract}

\section{Introduction}

In Bayesian inference the information on an unknown parameter $\bmtheta$ given
an observed dataset $\yobs$ is contained in the posterior distribution $p(\bmtheta \mid \yobs)$.
When a posterior distribution does not belong to a well-characterized family of distributions,
Markov chain Monte Carlo (MCMC) is a standard computational approach
to Bayesian inference via sampling from the posterior distribution. Typical examples
of MCMC include the Metropolis-Hastings (MH) algorithm (Metropolis et al. 1953, Hastings 1970) 
and the Gibbs sampler (Geman and Geman 1984, Gelfand and Smith 1990, Tanner and Wong 1987).
Thorough reviews of recent developments on
Monte Carlo methods and their applications in Bayesian computation can be found in 
Chen et al. (2001) and Liu (2008).
The posterior mean $E(\bmtheta \mid \yobs)$ and other expectations are usually 
approximated from a Monte Carlo sample to summarize the posterior distribution. 
However, these unconditional expectations may not
offer good summaries of the information for Bayesian inference when a posterior distribution 
has multiple local modes. 
One can easily construct a multimodal posterior distribution of which the mean is located in
a low-density region and thus using it as an estimator for $\bmtheta$ lacks a conventional interpretation.
To extract more information contained in a multimodal posterior distribution,
it is desired to identify all major modes and calculate various statistics in appropriate
neighborhoods of these modes.

To achieve these tasks, we propose to partition the sample space of $\bmtheta$ 
into a collection of domains 
such that the posterior distribution restricted to each domain is unimodal. 
The most parsimonious partition that minimizes the number of domains is to use the
domains of attraction (to be defined rigorously later) of the local modes.
Take the trimodal distribution $p(\bmtheta)$ in Figure~\ref{fig:intro} as an illustration. The space
is partitioned into three domains, denoted by $\Theta_1,\Theta_2$ and $\Theta_3$: 
Each domain contains exactly one local mode;
if we move any $\bmtheta \in \Theta_k$ $(k=1,2,3)$ along the trajectory that always follows the gradient direction
of $p(\bmtheta)$, it will eventually reach the mode in the domain $\Theta_k$. 
We may then calculate various conditional expectations on these domains,
$E[h(\bmtheta)\mid \bmtheta \in \Theta_k]$ ($k=1,2,3$), for different functions $h$.
Together with the probability masses of the domains, $P(\bmtheta \in \Theta_k)$, 
they provide more informative summaries of the distribution $p(\bmtheta)$ than unconditional expectations. 
Such a summary is called a domain-based
representation (DR) for $p(\bmtheta)$.

\begin{figure}[t]
   \centering
   \includegraphics[width=2.5in]{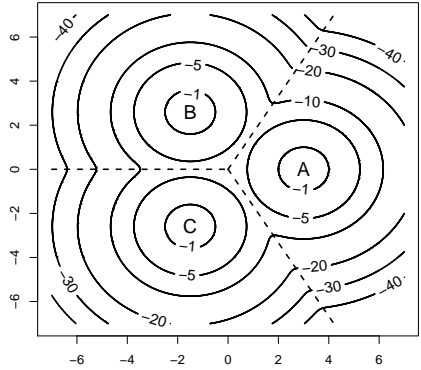} \\
   \caption{Contour plot of a two-dimensional density with three modes labeled A, B and C.
   The numbers report log densities of contours and the dashed curves mark the boundaries 
   between domains of attraction. \label{fig:intro}}
\end{figure}

Although desired, construction of DRs for an arbitrary distribution is very challenging in practice.
Sufficient Monte Carlo samples from domains of all local modes are necessary for estimating DRs, but
efficient sampling from a multimodal distribution has always been a hard problem.
In this article, we develop a computational method that is able to construct
domain-based representations for an arbitrary multimodal distribution. We partition the sample space
into domains of attraction and utilize an iterative weighting scheme aiming at
sampling from each domain with an equal frequency. The weighting scheme was proposed
by Wang and Landau (2001), and further developed and generalized by Liang (2005),
Liang et al. (2007) and Atchad\'e and Liu (2010) among others.
However, a direct application of these existing methods cannot provide
accurate estimation of DRs, due to at least two reasons. 
First, sample space partition used in these methods is usually predetermined
according to a set of selected density levels.
But partitioning the space into domains of attraction, as employed in our method, 
cannot be completed beforehand because it is a nontrivial job to detect all local modes and their
domains in real applications. Second, the above methods mostly rely on simple local moves
and lack a coherent global move to jump between different local modes.
To obtain accurate estimation of DRs, 
we propose a dynamic scheme to partition the sample space
into domains of attraction and devise a global move that utilizes
estimated DRs along sampling iterations to enable fast transitions between
multiple domains. Since the main feature of our method is to sample from
multiple domains and construct DRs, we call it the multi-domain (MD) sampler.

The MD sampler can be applied to a wide range of Bayesian inference problems
and it is particularly powerful in tackling problems with complicated posterior distributions.
Although there are many such applications in different fields, this article mainly 
concerns structural learning of causal Bayesian networks from experimental data. 
Learning network structure via Monte Carlo sampling is very
challenging as the multimodality of the posterior distribution is
extremely severe (Friedman and Koller 2003, Ellis and Wong 2008,
Liang and Zhang 2009). In this problem, a domain of attraction
is defined by a set of network structures, each represented
by a directed acyclic graph (DAG). 
In a sense, the goal of the MD sampler is to construct a detailed landscape
of the posterior distribution, which can provide new insights into the structural learning problem.
Application of our method to a scientific problem is illustrated by a study
on constructing protein-signaling networks from single-cell experimental data.
A living cell is highly responsive to its environment due to the existence of widespread and diverse signal transduction pathways.
These pathways constitute a complex signaling network as cross-talks usually exist between them. 
Knowledge of the structure of this network is critical to the understanding of various 
cellular behaviors and human diseases. Recent advances in biotechnology allow the biologist
to measure the states of a collection of molecules on a cell-by-cell basis. Such large-scale data
contain rich information for statistical inference of signaling networks, but powerful computational methods
are needed given the complexity in the likelihood function and the posterior distribution. 
With the MD sampler, not only can we build a signaling network from the posterior mean graph,
but also we may discover new pathway connections revealed by different domains
of the posterior distribution, which are not accessible by other approaches.

The remaining part of this article is organized as follows. 
Section~\ref{sec:doa} defines the domain of attraction and domain-based representation. 
In Section~\ref{sec:MD} we develop the MD sampler and its estimation of DRs,
with convergence and ergodicity of the sampler established in Appendix.
The method is tested in Section~\ref{sec:exRm} on an example in Euclidean space and
implemented in Section~\ref{sec:BN} for Bayesian inference of network structure with a simulation study. 
Section~\ref{sec:signaling} is 
the main application to the construction of signaling networks in human T cells. 
The article concludes with a discussion on related and future works.

\section{Domain-based representation}\label{sec:doa}

Let $p(\bfx)$, $\bfx \in \calX \subseteq \R^m$, be the density of the target distribution. Suppose that $p(\bfx)$
is differentiable and denote by $\nabla p(\bfx)$ the gradient of $p$ at $\bfx$. 
Define a differential equation
\begin{equation}
\frac{d\bfx(t)}{dt}= \nabla p(\bfx(t))
\end{equation}
and write a solution path of this equation as $\bfx(t), t\in [0,\infty)$, where $\bfx(0)$ is a chosen initial point.
Under some mild regularity conditions, $\bfx(\infty)$ converges
to a local mode of $p(\bfx)$, which is the basic intuition behind the gradient descent algorithm to maximize $p(\bfx)$.
Denote by $\{\bmnu_1,\ldots,\bmnu_K\}$ all the local modes, including the global mode, of $p(\bfx)$. 
For $\bfx\in \mathcal{X}$, let $\bfx(0)=\bfx$ and define the domain partition index by
\begin{equation}\label{eq:defI}
I(\bfx)=\left\{ 
\begin{array}{ll}
k, & \mbox{ if $\bfx(\infty) = \bmnu_k$, for $k=1,\ldots,K$} \\
0, & \mbox{ otherwise.}
\end{array}\right.
\end{equation}
It maps $\bfx$ to the index of the local mode to 
which the solution path starting at $\bfx$ converges.
\begin{definition}\label{def:Domain}
The domain of attraction of $\bmnu_k$ is
$D_k=\{\bfx \in \calX : I(\bfx)=k \}$ for $k=1,\ldots,K$.
For simplicity we may call $D_k$ an attraction domain or a domain of $p$.
\end{definition}
If the stationary points of $p(\bfx)$ have zero probability mass,
then $\{D_k:k=1,\ldots,K\}$ form a partition of the sample space $\calX$ except for 
a set of zero probability mass. This is the default setting for this article. 

Let $h(\bfx)$ be a $p$-integrable function of $\bfx$. 
Write the probability mass of $D_k$ and the conditional expectation of $h(\bfX)$ given $\bfX\in D_k$ as
\begin{eqnarray}
 \lambda_k  & = & P(\bfX\in D_k) = \int_{D_k} p(\bfx)  d\bfx, \label{eq:lambdak}\\
 \mu_{h,k} & = &E[h(\bfX)\mid \bfX\in D_k]=\frac{1}{\lambda_k} \int_{D_k} h(\bfx) p(\bfx) d\bfx, \label{eq:muhk}
\end{eqnarray}
respectively, for $k=1,\ldots,K$.
\begin{definition}
The domain-based representation of $h$ with respect to the distribution $p$ is a $2\times K$ array,
$DR_p(h)=\{(\mu_{h,k},\lambda_k): k=1,\ldots,K\}$.
\end{definition}
The DR is equivalent to 
the probability mass function of $E[h(\bfX) \mid I(\bfX)]$ that assigns probability $\lambda_k$ to 
$\mu_{h,k}$ for $k=1,\ldots,K$. 
It provides the expectation $E[h(\bfX)]=\sum_k  \lambda_k \mu_{h,k}$ and 
the decomposed contributions from the attraction domains of $p(\bfx)$. Such a representation
gives an informative low-dimensional summary of a multimodal distribution.
For a complex distribution with many local modes, however, we cannot afford to estimate 
$(\mu_{h,k},\lambda_k)$ for every domain when $K$ is too large and 
are less interested in domains of negligible
probability masses ($\lambda_k$ very close to zero).
Due to these reasons we define domain-based representations with respect to
a set of local modes $\{\bmnu_k: k = 1, \ldots, M\}$. 
Index all the local modes as $\bmnu_1,\ldots,\bmnu_M,\ldots,\bmnu_K$ $(M\leq K)$.
Define the domain partition index with respect to $\{\bmnu_k\}_{k=1}^M$ by $I_M(\bfx)=I(\bfx)$ if $1\leq I(\bfx) \leq M$
and $I_M(\bfx)=0$ otherwise. Then the sample space $\mathcal{X}$ can be partitioned
into $D_k=\{\bfx\in \mathcal{X} : I_M(\bfx)=k\}$ for $k=0,\ldots,M$, where $D_k$ is the domain of $\bmnu_k\, (k \geq 1)$ and 
$D_0 = \mathcal{X} - \bigcup_{k=1}^M D_k$. The DR of $h$ with respect to $\{\bmnu_k\}_{k=1}^M$
is defined by the array $\{(\mu_{h,k},\lambda_{k}): k=0,\ldots,M\}$, where $\lambda_0$ and
$\mu_{h,0}$ are defined for $D_0$ similarly as in Equations (\ref{eq:lambdak}) and (\ref{eq:muhk}), respectively.
Note that one can still obtain $E[h(\bfX)]=\sum_{k=0}^M  \lambda_k \mu_{h,k}$ after merging $D_{M+1},\ldots,D_K$
into $D_0$.

There is a geometric interpretation for the attraction domains of a posterior distribution.
Suppose that $\bfY=(\bfY_1,\ldots,\bfY_n)$ is a sample from an unknown distribution $\psi(\bfy)$.
We assume a parametric family $f_{\bmtheta}=f(\cdot \mid \bmtheta)$, $\bmtheta \in \Theta$, 
as the model for $\bfY$ and put a prior $\pi(\bmtheta)$
on the unknown parameter $\bmtheta$. The posterior distribution of $\bmtheta$ is given by
\begin{equation*}
p(\bmtheta \mid \bfY)  \propto  \pi(\bmtheta) \prod_{i=1}^n f(\bfY_i\mid \bmtheta) 
 \approx  e^{ n E [\log f(\bfY_1 \mid \bmtheta)]  }
\end{equation*}
when the sample size $n$ is large, where 
$E [\log f(\bfY_1 \mid \bmtheta)] = \int \log [f(\bfy\mid\bmtheta)] \psi (\bfy) d\bfy$.
Denote the Kullback-Leibler (KL) divergence between $\psi$ and $f_{\bmtheta}$ by
\begin{equation*}
d_{KL}(\psi \| f_{\bmtheta})=\int \log\left[ \frac{\psi(\bfy)}{f(\bfy\mid \bmtheta)} \right] \psi(\bfy) d\bfy. 
\end{equation*}
Then, $p(\bmtheta \mid \bfY)  \propto   \exp[ - n d_{KL}(\psi \| f_{\bmtheta})]$ when $n$
is large. In the space of density functions, we can regard $d_{KL}(\psi \| f_{\bmtheta})$ as the ``distance" to 
the point $\psi$ from a point in the manifold $\mathcal{M} = \{f_{\bmtheta} : \bmtheta \in \Theta \}$. 
Note that $\psi$ is not necessarily in $\mathcal{M}$ if our model assumption on $\bfY$ is incorrect.
Then $p(\bmtheta\mid \bfY)$ may be interpreted as a Boltzmann distribution 
on the manifold $\mathcal{M}$ under a potential field $n d_{KL}(\psi \| f_{\bmtheta})$. 
This potential pushes every $f_{\bmtheta}$, indexed by $\bmtheta$,
towards $\psi$, and the collection of $\bmtheta$
which will be driven to an identical
stationary point in $\mathcal{M}$ forms an attraction domain of $p(\bmtheta \mid \bfY)$.

\section{The multi-domain sampler} \label{sec:MD}

To develop an algorithm that is able to construct domain-based
representations with respect to the target distribution $p$, it is necessary 
to identify the attraction domain of any $\bfx\in \mathcal{X}$. When $p(\bfx)$ is differentiable,
this can be achieved by application of the gradient descent (GD) algorithm that finds local modes 
of $p(\bfx)$, or $\log p(\bfx)$ for computational convenience. 
Generalization to the space of network structures will be discussed later. 
 
A naive two-step approach to the construction of DRs is quite obvious. We may first apply a Monte Carlo algorithm
to simulate a sample $\{\bfX^t\}_{t=1}^n$ from $p(\bfx)$ or from a diffuse version of $p(\bfx)$, 
e.g., $[p(\bfx)]^{1/\tau}$ for $\tau > 1$ as used in parallel tempering (Geyer 1991).
Then, for every $t$ we determine $I(\bfX^t)$ by a GD search 
initiated at $\bfX^t$ to find to which domain it belongs. 
This approach partitions the sample
into attraction domains so that we can estimate the probability masses and conditional
expectations for all identified domains. Although simple to implement, this two-step approach has a few drawbacks
in terms of efficiency. Without a careful and specific design the Monte Carlo algorithm, even targeting at a diffuse
version of $p(\bfx)$, may not generate 
enough samples from all major domains or may completely miss some modes.
As a result, estimation on some attraction domains
may be inaccurate or unavailable. In addition, this approach does not utilize the information on the target distribution
provided by the constructed DRs. To overcome these drawbacks, we
develop the MD sampler that may achieve simultaneously 
an efficient simulation from a multimodal distribution and an accurate construction of
domain-based representations, with comparable computational complexity as the naive 
two-step approach. 

\subsection{The main algorithm}

We wish to sample sufficiently from the majority of attraction domains. 
However, the density at the boundary between two neighboring domains is often exponentially
low (e.g., Figure~\ref{fig:intro}), 
which makes it difficult for an MH algorithm or a Gibbs sampler to jump across multiple domains.
Thus, we need to allow the sampler to generate enough samples from such low-density regions
that connect different domains. These considerations motivate 
the following double-partitioning design in the MD sampler.

Suppose that we are given a set of local modes of $p(\bfx)$, $\{\bmnu_1,\ldots,\bmnu_M\}$, which may include the 
global mode. Given $\infty=H_0>H_1>\ldots >H_L=-\infty$, define the density partition index $J(\bfx)=j$ 
if $\log p(\bfx)  \in [H_j, H_{j-1})$ for $j=1,\ldots,L$.
We partition the space $\mathcal{X}$ into $(M+1) \times L$ subregions, 
\begin{equation}\label{eq:aepar}
D_{kj}= \{ \bfx\in \mathcal{X} : I_M(\bfx)= k, J(\bfx)=j \},\, k=0,\ldots,M, j=1,\ldots,L,
\end{equation}
where $I_M(\bfx)$ is the domain partition index with respect to $\{\bmnu_k\}_{k=1}^M$.
Then, the attraction domain of the local mode $\bmnu_k$ is $D_k=\bigcup_j D_{kj}\,(1\leq k\leq M)$. Note that 
some $D_{kj}$ may be empty; if $\log p(\bmnu_k)<H_i$ then all $D_{kj}$ for $j\leq i$ are empty.
In what follows, we only consider nonempty subregions.
For a given matrix $\bfW = (w_{kj})_{(M+1) \times L}$, define a working density 
\begin{equation}\label{eq:weightds}
p(\bfx; \bfW) \propto \sum_{k=0}^{M} \sum_{j=1}^L \frac{p(\bfx) \I (\bfx \in D_{kj} )}{\exp(w_{kj})},
\end{equation}
where $\bm{1}(\cdot)$ is an indicator function. Let $\bfW^*=(w_{kj}^*)$ such that 
$\exp(w^*_{kj}) = \int_{D_{kj}} p(\bfx) d\bfx$.
Then, the probability masses of $D_{kj}$ are identical under $p(\bfx;\bfW^*)$.
Sampling from $p(\bfx;\bfW^*)$ has two immediate implications. 
First, the sample sizes on the attraction domains, $\{D_k\}_{k=1}^M$, will be comparable, 
and thus, domain-based representations can be constructed with a high accuracy. 
Note that commonly used MCMC strategies for multimodal distributions, such as tempering,
cannot generate samples of comparable sizes from different domains.
Second, the sampler will stay in low-density
regions (e.g., $D_{kL}$) for a substantial fraction of time, which makes it practically possible to jump between domains.
Conversely, domain-based representations may be utilized to design efficient local and 
global moves for sampling from $p(\bfx;\bfW^*)$. We may
construct online estimate of the covariance matrix on the domain of a local mode, which can be
used for tuning the step size of a local move in this domain. 
For a multimodal distribution, tuning step size for each domain is more useful than 
tuning the overall step size (Harrio et al. 2001).
Once we have identified sufficient local modes and estimated covariances of their respective domains, 
we can use them to design global moves that may jump from one domain to another. 
As one can see, these proposals can be implemented only if we have partitioned samples into attraction domains.

For $\bfx,\bfy \in \mathcal{X}$, let $q(\bfx,\bfy)$ 
be a proposal from $\bfx$ to $\bfy$ and $r(\bfx; \bmtheta,\bfV)$ a distribution with 
parameters $\bmtheta$ and $\bfV\in \mathcal{V}$. We first develop the main algorithm
of the MD sampler, assuming that the density ladder $\{H_j\}$ is fixed and $M$ local modes
$\{\bmnu_k\}_{k=1}^M$ have been identified. 
Dynamic update of these parameters will be discussed in Section~\ref{sec:burnin} as the burn-in algorithm.
Let $\bfg_k(\bfx)$ be a map from $\calX$ to $\mathcal{V}$ for $k=1,\ldots,M$.  
\begin{algorithm}[The main algorithm]\label{alg:main}
Initialize $\bfW^1=(w^1_{kj})$, parameters  
$\bfV_k^1\in \mathcal{V}$ $(k=1,\ldots,M)$, $p_{mx} \in [0,1)$, and $\bfX^1 \in \mathcal{X}$. Set $\gamma_1\leq 1$.
For $t=1,\ldots,n$:
\begin{enumerate}
\item{Draw $\bfY$ from $q(\bfX^t,\bfy)$ with probability $(1-p_{mx})$ or from the mixture distribution 
$\frac{1}{M} \sum_{k=1}^M r(\bfy ; \bmnu_k, \bfV_k^t)$ with probability $p_{mx}$.}
\item{Determine the density partition index $J(\bfY)$ and perform a GD search initiated at $\bfY$
to determine the domain partition index $I_M(\bfY)$.}
\item{Accept or reject $\bfY$ according to the MH ratio targeting at $p(\bfx;\bfW^t)$ to obtain $\bfX^{t+1}$.}
\item{For $k=0,\ldots,M$ and $j=1,\ldots,L$ update 
\begin{equation}\label{eq:updatew}
w^{t+1}_{kj} =  w^{t}_{kj} + \gamma_t \I(I_M(\bfX^{t+1})=k,J(\bfX^{t+1})=j),
\end{equation}
for $k=1,\ldots,M$ update
\begin{equation}\label{eq:updateV}
\bfV_k^{t+1}  =  \bfV_k^t + \frac{\gamma_t}{2} \left[\bfg_k(\bfX^{t+1}) -\bfV_k^t \right] \I(I_M(\bfX^{t+1})=k),
\end{equation}
and determine $\gamma_{t+1}$.
}
\end{enumerate}
\end{algorithm}

We may regard $w^t_{kj}$ as a weight for the subregion $D_{kj}$. 
After each visit to $D_{kj}$, the weight $w_{kj}^{t+1}$ increases by $\gamma_t$ unit (\ref{eq:updatew}),
which decreases the probability mass of $D_{kj}$ under the working density 
$p(\bfx; \bfW^{t+1})$ (\ref{eq:weightds}). Such a weighting scheme aims to visit every $D_{kj}$ uniformly. 
There are two typical choices of $\{\gamma_t\}$. The first choice follows the standard design in 
stochastic approximation which employs a predetermined sequence such that 
$\sum_{t=1}^{\infty} \gamma_t = \infty$
and $\sum_{t=1}^{\infty} \gamma_t^{\zeta} < \infty$ for $\zeta\in (1,2)$
(Andrieu et al. 2005, Andrieu and Moulines 2006, Liang et al. 2007).
The second design, originally proposed by Wang and Landau (2001), adjusts $\gamma_t$ in 
an adaptive way. However, there is difficulty in establishing its convergence (Atchad\'e and Liu 2010),
and thus we adopt a modified Wang-Landau (MWL) design to update $\{\gamma_t\}$ in the MD sampler.
Initialize $c^1_{kj}=0$ for all $k$ and $j$ in Algorithm~\ref{alg:main}. The MWL
update at iteration $t\,(t=1,\ldots,n)$ is given below.
\begin{routine}[MWL update]\label{MWLdesign}
If $\gamma_t<\epsilon_{\gamma}$, set $\gamma_{t+1} =\gamma_t (\gamma_t +1)^{-1}$; otherwise:
\begin{itemize}
\item{Set $c^{t+1}_{kj}=c^{t}_{kj}+\I(\bfX^{t+1} \in D_{kj})$ for all $k,j$ and 
calculate $\Delta c^{t+1}_{\max} = \max_{k,j} | c^{t+1}_{kj} - \bar{c}^{\,t+1} |$,
where $\bar{c}^{\,t+1}$ is the average of all $c^{t+1}_{kj}$.}
\item{ 
If $\Delta c^{t+1}_{\max}\geq\eta \bar{c}^{\,t+1}$, set $\gamma_{t+1}=\gamma_t$;
otherwise, set $\gamma_{t+1}=\rho \gamma_t$ and $c^{t+1}_{kj}=0$ for all $k,j$.
}
\end{itemize}
\end{routine}
That is, if $\gamma_t \geq \epsilon_{\gamma}$ we decrease $\gamma_t$ ($\rho<1$) 
only when the sampler has visited every subregion $D_{kj}$
with a roughly equal frequency since our last modification of $\gamma_t$. 
Let $t_c=\min\{t:\gamma_t<\epsilon_{\gamma}\}$.
For $t>t_c$ the update becomes deterministic with $\gamma_t = 1/(t+\xi)$,
where $\xi=\gamma_{t_c}^{-1}-t_c$.
The default setting for all the results in this article is given by 
$\rho=0.5$, $\eta=0.25$ and $\epsilon_{\gamma}=10^{-4}$.

Under some regularity conditions and the MWL update of $\{\gamma_t\}$, 
\begin{equation}\label{eq:wlimit}
\exp(w_{kj}^{n}) \toas \int_{D_{kj}} p(\bfx) d\bfx = \exp(w^*_{kj}),
\end{equation}
after being normalized to sum up to one, 
\begin{equation}\label{eq:Vlimit}
\bfV_k^n \toas \frac{\int_{D_{k}} \bfg_k(\bfx)  p(\bfx;\bfW^*) d\bfx}{\int_{D_k}p(\bfx;\bfW^*)d\bfx} \defi \bfV_k^*,
\end{equation}
\begin{equation}\label{eq:flimit}
\frac{1}{n}\sum_{t=1}^n h(\bfX^t) \toas \int_{\calX}  h(\bfx) p(\bfx;\bfW^*) d\bfx, 
\end{equation}
as $n\to\infty$. See Theorem~\ref{thm:convergence} in Appendix for more details. If $\mathcal{X} \subseteq \R^m$, 
we often choose
$\bfg_k(\bfx)= (\bfx-\bmnu_k)(\bfx-\bmnu_k)^{\mathsf{T}}$
so that $\bfV_k^*$ is close to the covariance matrix of the conditional distribution 
$[\bfX\mid \bfX\in D_k]$, where $\bfX\sim p(\bfx;\bfW^*)$.
We use the mode $\bmnu_k$
instead of the mean because the mode can be obtained accurately via a GD algorithm.
The use of $\gamma_t/2\, (<1)$ in Equation \eqref{eq:updateV} ensures that 
$\bfV_k^{t+1}$ is positive definite if $\bfV_k^t$ is positive definite.

There are two types of proposals in step 1 of the algorithm, a local proposal
$q(\bfX^t,\bfy)$ and a mixture distribution proposal.
One advantage of partitioning samples into attraction domains is embodied in the mixture distribution
proposal, in which we randomly draw a domain partition index $k \in \{1,\ldots,M\}$ and then
propose a sample $\bfY$ from $r(\bfy; \bmnu_k, \bfV_k^t)$. 
Equal mixture proportions ($1/M$) are used because a uniform sampling across domains is preferred.
The default choice of the distribution $r(\bfy; \bmnu_k, \bfV_k^t)$
in $\R^m$ is $\dnorm(\bmnu_k,\bfV_k^t)$ for $k=1,\ldots,M$, which gives a mixture normal proposal that matches the
mode and the covariance on each domain of the working target $p(\bfx;\bfW^t)$.
This proposal uses a mixture distribution to approximate the multimodal target. 
It can generate efficient global jumps from one domain to another if $p(\bfx;\bfW^t)$
on the domain $D_k$ can be well approximated by $r(\bfx; \bmnu_k, \bfV_k^t)$ with the identified mode $\bmnu_k$ and 
the estimated $\bfV_k^t$. For simplicity we call this proposal the \emph{mixed jump}.
The typical design for $q(\bfX^t,\bfy)$ in $\R^m$ is to proposal $\bfY\sim \dnorm(\bfX^t,  \sigma^2 \mathbf{I})$, where
$\sigma^2$ is a scalar and $\mathbf{I}$ is the identity matrix. 
However, when the covariances are very different between domains,
using a single local proposal may cause high autocorrelation, since the step
size might be either too big for domains with small covariances or too small for those with
large covariances, or both. In this case, we may incorporate an adaptive local proposal,
$\bfY\sim \dnorm(\bfX^t,  \sigma^2 \bfV_{I_M(\bfX^t)}^t)$, in addition to $q(\bfX^t,\bfy)$,
such that the learned covariance structure of a domain is utilized to guide the local proposal.
This shows another advantage of the domain-partitioning design.

\begin{remark} 
We summarize the unique features of the main algorithm.
First, domain partitioning is incorporated in the framework of the Wang-Landau (WL) algorithm. 
This allows a more uniform sampling from different domains, which facilitates
construction of DRs. At each iteration, a GD search is employed
to determine $I_M(\bfY)$ and thus the computational complexity of this algorithm
is comparable to the naive two-step approach. Second, an adaptive global move, the mixed jump, is proposed given
DRs constructed along the iteration, which utilizes identified modes and learned covariances
to achieve between-domain moves.
\end{remark}

\begin{remark}\label{rmk:diagnostics}
Verification of the regularity conditions for convergence of the algorithm (see Appendix)
is recommended before application. Furthermore, we suggest a few convergence diagnostics that can 
be conveniently used in practice. First, $\gamma_n$ should be small enough at the final iteration
and the frequency of visiting different $D_{kj}$ should be roughly identical. Second, 
$\bfW^n$ and $\bfV_k^n$ should have converged with an acceptable accuracy. 
Violations of these two criteria indicate that more iterations may be necessary.
Third, the adaptive parameters used in the mixed jump ($\bfV_k^t$) should always stay in a reasonable
range. For example, if $\bfV_k^t$ is a covariance matrix, one may check whether its eigenvalues
are close to zero or unreasonably large, which may indicate divergence of the current run. 
If the last criterion is not satisfied, it is suggested to reinitialize the MD sampler 
with a smaller $\gamma_1$.
\end{remark}
  
\subsection{The burn-in algorithm}\label{sec:burnin}

In practical applications of the MD sampler, the density ladder $\{H_j\}$ and the local modes $\{\bmnu_k\}$ are
updated dynamically in a burn-in period before the main algorithm (Algorithm \ref{alg:main}).
The dynamic updating schemes are crucial steps for constructing domain-based
representations in real applications, as one cannot partition the sample space into domains of attraction beforehand.
We set $H_1,\ldots,H_{L-1}$ as an evenly spaced sequence so that $\Delta H= H_j-H_{j+1}$
is a constant for $j=1,\ldots,L-2$. 
Let $\{H_j^t\}$ be the density ladder and $\Lambda^t=\{ \bmnu_k^t : k=1,\ldots, M^t \}$ 
be the set of $M^t$ identified modes at iteration $t$.
Let $K^*$ be the maximum number of modes to be recorded and
denote by $\bmnu_{(\bfx)}$ the mode of the domain that $\bfx$ belongs to.
Let $\mathbf{0}$ be the zero matrix with dimension determined by the context.
The following routine is used to update $\Lambda^t$ when a new sample $\bfY$ is proposed.
\begin{routine} \label{rt:updatemodes}
Let $s^t=\underset{1\leq k \leq M^t}{\mbox{argmin}} \;p(\bmnu_k^t)$.  
\begin{itemize}
\item{If $\bmnu_{(\bfY)} \not\in \Lambda^t$ and $M^t<K^*$, set $\Lambda^{t+1}=\Lambda^t \cup \{\bmnu_{(\bfY)}\}$,
$M^{t+1}=M^t+1$, and initialize $w^t_{M^{t+1}j}=0$ for all $j$;}
\item{if $\bmnu_{(\bfY)} \not\in \Lambda^t$, $M^t=K^*$ and $p(\bmnu_{(\bfY)})>p(\bmnu^t_{s^t})$,
set $\bmnu_{s^t}^{t+1}=\bmnu_{(\bfY)}$, $\bmnu_k^{t+1}=\bmnu_k^t$ for $k\ne s^t$, $M^{t+1}=M^t$, and
assign $w_{0j}^t \Leftarrow w_{0j}^t+ w_{s^tj}^t$ and $w_{s^tj}^{t}\Leftarrow 0$ for all $j$.}
\item{otherwise set $\Lambda^{t+1}=\Lambda^t$ and $M^{t+1}=M^t$.}
\end{itemize}
\end{routine}
According to this routine, we record at most the $K^*$ highest modes identified during the burn-in period. 
If there are more than $K^*$ modes, Algorithm~\ref{alg:main} will construct DRs with respect to the 
recorded modes. The weights $(w_{kj}^t)$ are updated when a new mode replaces an old one in $\Lambda^{t}$,
for which the assignment operator `$\Leftarrow$' is used to distinguish from equality.

The density ladder $\{H_j^t\}$ is adjusted such that
$H_1^t$, the lower bound of the highest density partition interval, is close to $\log u^*$, 
where $u^*$ is the density of the highest mode identified so far. 
If $\log u^* > H^t_1+\Delta H$ 
we move upwards the density ladder by $\Delta H$ unit and update the weights $(w^t_{kj})$ accordingly,
with details provided in Routine~\ref{rt:updateladder}.
This strategy helps the sampler to explore the high-density part,
which is important for statistical estimation and finding the global mode.
\begin{routine} \label{rt:updateladder}
Given $\Lambda^{t+1}$, find $u^{t+1}=\max\{p(\bmnu_k^{t+1}):  k=1, \ldots, M^{t+1}\}$.
\begin{itemize}
\item{If $\log u^{t+1}> H_1^t+\Delta H$,
set $H_j^{t+1}=H_j^t+\Delta H$ for $j=1,\ldots,L-1$; 
for $k=0,\ldots,M^{t+1}$, assign $w_{kL}^{t}\Leftarrow w_{k(L-1)}^{t}+w_{kL}^{t}$, $w_{kj}^{t}\Leftarrow w_{k(j-1)}^{t}$ for
$j=L-1,\ldots,2$, and $w^t_{k1}\Leftarrow 0$;}
\item{otherwise set $\{H_j^{t+1}\}= \{H_j^{t}\}$.}
\end{itemize}
\end{routine}

\begin{algorithm}[The burn-in algorithm]\label{alg:burnin}
Input $L,\Delta H$ and $K^*$. Set $\gamma_0=1$. 
Initialize $\bfX^1\in \mathcal{X}$, $\Lambda^1=\{\bmnu_{(\bfX^1)}\}$, $M^1=1$, $\bfW^1=(w_{kj}^1)_{2\times L}=\mathbf{0}$,
and $\bfV_1^1$. Set $H_1^1=\log p(\bmnu_{(\bfX^1)})$ and
$H_{j}^1=H_{j-1}^1-\Delta H$ for $j=2,\ldots,L-1$.
For $t=1,\ldots,B$:
\begin{enumerate}
\item{Draw $\bfY\sim q(\bfX^t,\bfy)$ and find $\bmnu_{(\bfY)}$ by a GD search.}
\item{Given $\bmnu_{(\bfY)}$, update $\Lambda^{t+1}$ and $M^{t+1}$ by
Routine \ref{rt:updatemodes}; if $\bmnu_{\ell}^{t+1}=\bmnu_{(\bfY)}$ is a new mode in $\Lambda^{t+1}$, 
initialize $\bfV_{\ell}^{t}$. Given $\Lambda^{t+1}$, update $\{H_j^{t+1}\}$ by Routine \ref{rt:updateladder}.}
\item{Given $\Lambda^{t+1}$ and $\{H_j^{t+1}\}$,
accept or reject $\bfY$ with the MH ratio targeting at $p(\bfx;\bfW^t)$ to obtain $\bfX^{t+1}$.
}
\item{Execute step 4 of Algorithm \ref{alg:main} with $\gamma_t=\gamma_0$.}
\end{enumerate}
\end{algorithm}

\begin{remark}
Note that $\gamma_t=1$ for every iteration in the burn-in algorithm. This pushes the sampler
to explore different regions in the sample space so that more local modes can be identified. In this case,
the weight $w^t_{kj}$ records the number of visits to $D_{kj}$ before the $t$th iteration, 
which is the reason for our updating
schemes on $\{w^t_{kj}\}$ in Routine~\ref{rt:updatemodes} when a mode is updated in $\Lambda^{t+1}$
and in Routine \ref{rt:updateladder} when the density ladder changes.
\end{remark}

\begin{remark}\label{rmk:burninopt}
The burn-in algorithm can be used as an optimization method that searches for up to $K^*$
local modes of the highest densities. As demonstrated in the Bayesian network applications, 
this algorithm is very powerful in finding global modes.
\end{remark}

The MD sampler requires only a few input parameters, $L,\Delta H, p_{mx}$, and $K^*$. A practical rule is
to choose $L$ and $\Delta H$ such that the range of the density partition intervals, $L\Delta H$ in log scale, 
is wide enough to cover important regions. 
In this paper, we set $L\Delta H$ around 20 for the low-dimensional test example in Section~\ref{sec:exRm}
and around 200 for learning Bayesian networks in Sections~\ref{sec:BN} and \ref{sec:signaling}. 
By default the probability of proposing a mixed jump $p_{mx} =0.1$.
The effect of keeping only $K^*$ modes will be studied later with the examples. 

\subsection{Statistical estimation}

The domain-based representation of $h$ is constructed by estimating $\lambda_k$ (\ref{eq:lambdak}) and 
$\mu_{h,k}$ (\ref{eq:muhk}) for $k=0,\ldots,M$ with post burn-in samples, denoted by $\{\bfX^{t+1}\}_{t=1}^n$.
Let $k^t = I_M(\bfX^{t+1})$, $j^t = J(\bfX^{t+1})$, $a_t=\sum_{k,j}\exp(w^t_{kj})$, and
$\exp(\tilde{w}^t_{kj})= \exp({w}^t_{kj})/a_t$ such that $\sum_{k,j}\exp(\tilde{w}^t_{kj})=1$.
The key identity for our estimation is 
\begin{equation}\label{eq:iwest}
\frac{\sum_{t=1}^{\infty} h(\bfX^{t+1}) \exp(\tilde{w}^t_{k^tj^t}) }{\sum_{t=1}^{\infty} \exp(\tilde{w}^t_{k^tj^t})} 
\toas \int_{\calX} h(\bfx) p(\bfx) d\bfx, 
\end{equation}  
which follows from \eqref{eq:flimit} as 
$\exp(\tilde{w}^t_{k^tj^t})\toas \exp(w^*_{k^tj^t}) \propto p(\bfX^{t+1})/p(\bfX^{t+1}; \bfW^*)$ asymptotically \eqref{eq:wlimit}.
See Liang (2009) and Atchad\'e and Liu (2010) for similar results. However, $\tilde{\bfW}^t=(\tilde{w}_{kj}^t)$ may
be far from $\bfW^*$ even for post burn-in iterations. Thus, it is desired to use a weighted version
of \eqref{eq:iwest} so that $\bfX^{t+1}$ will carry a higher weight if $\tilde{\bfW}^t$ is closer to $\bfW^*$.
Since decrease in $\gamma_t$ indicates convergence of the MD sampler and 
$a_{t+1}/a_t \toas e^{c\gamma_t}$ for $c\in(0,1)$ (supplementary document), 
a reasonable choice is to weight $\bfX^{t+1}$ by $a_t$ so that
unnormalized $(w^t_{kj})$ will be used in \eqref{eq:iwest}.
Consequently, $DR_p(h)$ is constructed with
\begin{eqnarray*}
\hat{\lambda}_{k} & = & \frac{\sum_{t=1}^n \bm{1}(\bfX^{t+1} \in D_{k})\exp(w^t_{k^tj^t})}{\sum_{t=1}^n \exp(w^t_{k^tj^t})}, \\
\hat{\mu}_{h,k} & = &  \frac{\sum_{t=1}^n h(\bfX^{t+1}) \bm{1}(\bfX^{t+1} \in D_{k})\exp(w^t_{k^tj^t})}{\sum_{t=1}^n \bm{1}(\bfX^{t+1} \in D_{k})
\exp(w^t_{k^tj^t})},
\end{eqnarray*}
for $k=0,\ldots,M$. Then, $\mu_h = E[h(\bfX)]$ is estimated by $ \hat{\mu}_h = \sum_k \hat{\lambda}_k\hat{\mu}_{h,k}$.
Please see supplementary document for more discussion on this weighted estimation.

In the next three sections we demonstrate the effectiveness of the MD sampler in statistical estimation,
especially estimation of DRs, compared to the naive two-step approach.
For all examples, we employ the WL algorithm with the MWL update (Routine~\ref{MWLdesign}) as 
the Monte Carlo method in the two-step approach.
To minimize hidden artifacts in a comparison due to coding differences, 
we implement the WL algorithm with the same burn-in and main algorithms of the MD sampler.
In the main algorithm (Algorithm \ref{alg:main}) we replace the updating scheme in Equation (\ref{eq:updatew}) with 
\begin{equation}\label{eq:WLupdatew}
w^{t+1}_{kj} =  w^{t}_{kj} + \gamma_t \I(J(\bfX^{t+1})=j)
\end{equation}
for $k=0,\ldots,M$ and $j=1,\ldots,L$ and modify the burn-in algorithm accordingly, so that
$w^t_{0j}=\cdots=w^t_{Mj}\defi w^t_j$ for every iteration. Consequently,
the working density is effectively
\begin{equation}\label{eq:WLworkingtarget}
p(\bfx; \bfW^t) \propto \sum_{j=1}^L \frac{p(\bfx) \I (J(\bfx)=j)}{\exp(w^t_{j})}
\end{equation}
as used in the WL algorithm, which is a diffuse version of $p(\bfx)$ such that each density partition interval
will be equally sampled after convergence. Note that the same GD search is applied 
at each iteration to partition samples into attraction domains for estimating DRs.
Our comparison aims to highlight the effect of domain partitioning and the mixed jump
in the MD sampler which are the key differences from the WL algorithm.

\section{A test example}\label{sec:exRm}

We test the MD sampler with an example in $\R^m$. For this example, 
domain-based representations can be obtained via one-dimensional
numerical integration with a high accuracy, which provides the basis to evaluate our estimation. 
We choose $K^*=100$, which is greater than the total number
of local modes, to construct complete DRs. 

Let $\bfx=(x_1,\ldots,x_m)$. The Rastrigin function (Gordon and Whitley 1993) 
is defined as
\begin{equation} \label{eq:Rastrigin}
R(\bfx)=\sum_{i=1}^{m}{x_i^2}+A\left[m-\sum_{i=1}^{m}\cos(\pi x_i)\right],
\end{equation}
where $A$ is a positive constant.
We set $A=2$ and $m=4$ in (\ref{eq:Rastrigin}) to obtain our target distribution 
$p(\bfx) \propto \exp[-R(\bfx)]$, which has
$3^4=81$ local modes formed by all the elements of the product set $\{-1.805,0,1.805\}^4$. These local modes
have five distinct log density values, 0, $-3.62$, $-7.24$, $-10.87$, and $-14.49$, 
dependent on the combinations of their coordinates.
They are grouped accordingly into five layers so that the number of zeros and 
the number of $\pm 1.805$ in the coordinates 
of a local mode at the $k$th layer are $(5-k)$ and $(k-1)$, respectively, for $k=1,\ldots,5$.
The attraction domains of local modes at the same layer have identical probability masses 
and identical conditional means up to a permutation and change of signs of the coordinates.

We applied the MD sampler 100 times independently, each run
with $L=10$ density partition intervals, $\Delta H=2$, $B = 50$K burn-in
iterations and a total of 5 million (M) iterations (including the burn-in iterations).
The local proposal was simply $\dnorm(\bfX^t, \mathbf{I})$. 
The average acceptance rate was 0.26 for the local move and was 0.56 for the mixed jump.
Let $\bfX=(X_1,\ldots,X_4)$ and $S=\sum_i X_i$.
We estimated $E(\bfX)$, $E(e^{2S})$, $E(\prod_i X_i)$, $E(\sum_i X_i^5)$, and
$E(\sum_i X_i^6)$, all via domain-based representations. Since the target density of
this example is a product of one-dimensional marginal densities, 
the above expectations can be calculated accurately through one-dimensional numerical integration. 
We compared our estimates from MD sampling with the results from numerical integration by computing
mean squared errors (MSEs). We report the average MSE of 
the estimated log probability masses ($\log \lambda_k$) and the average MSE of the estimated
conditional means ($\bmmu_{\bfX,k}$) over all the local modes at the same layer ($k=1,\ldots,5$), and for other functions
we only report the MSEs of the estimated expectations to save space (Table~\ref{tab:rastrigin}).

\begin{table}[t]
\caption{MSE comparison on the Rastrigin function} \label{tab:rastrigin}
\centering
   \begin{tabular}{cccc|cccc|lccc} 
   \hline
     	  					& MSE	& \multicolumn{2}{c|}{RMSE}&& MSE		& \multicolumn{2}{c|}{RMSE}	&  & MSE		& \multicolumn{2}{c}{RMSE}\\
				   		& MD		& WL			&	MD$_0$		&		& MD			& WL		& MD$_0$	& 		& MD			& WL 	& MD$_0$ \\
  \hline
$\log \lambda_1$ & 1.1e-5	& 1.93	& 2.24	& $\bmmu_{\bfX,1}$	& 2.3e-4	& 0.83 & 2.87	&	$\bfX$					& 1.7e-4 & 1.22 & 3.19 \\
$\log \lambda_2$ & 3.6e-3	& 2.35	& 2.47	& $\bmmu_{\bfX,2}$	& 2.5e-4	& 2.88 &2.96	&	$e^{2S}$	& 0.59 & 3.25 	& 3.09\\
$\log \lambda_3$ & 3.5e-3	& 4.64	& 3.22	& $\bmmu_{\bfX,3}$	& 2.8e-4	& 6.16 &3.34   &	$\prod_i X_i$		& 1.6e-9 & 2.84 & 2.70\\
$\log \lambda_4$ & 3.3e-3	& 8.63	& 3.92	& $\bmmu_{\bfX,4}$	& 2.9e-4	& 12.0 &	4.09 &	$\sum_i X_i^5$		& 6.1e-3 & 2.71 & 2.06\\
$\log \lambda_5$ & 3.2e-3	& 16.8	& 4.66	& $\bmmu_{\bfX,5}$	& 3.3e-4	& 21.1 &	5.06 &	$\sum_i X_i^6$		& 0.11 & 1.95 	& 2.26\\
    \hline
  \end{tabular}
\end{table}

As a comparison, we also applied the WL algorithm (as in the naive two-step approach) 
to this problem with the same parameter setting.
The ratio (RMSE) of the MSE of the WL algorithm over that of the MD sampler for each estimate is given 
in Table~\ref{tab:rastrigin}. The WL algorithm showed larger MSEs than the MD sampler 
for almost all the estimates, especially for those on domains at layers 3, 4 and 5.
For example, the MD sampler was at least 16 times more efficient than the WL algorithm for estimating
$\log \lambda_5$ and $\bmmu_{\bfX,5}$. 
The WL algorithm did not simulate sufficient samples
from these domains, although it visited uniformly different density partition intervals. 
On the contrary, the double-partitioning design facilitated the MD sampler to
explore every domain in a uniform manner, which led to a substantial improvement 
in estimation for these layers. This shows the critical role of domain partitioning in estimating DRs. 
To study the effect of the mixed jump, we re-applied the MD sampler with
$p_{mx}=0$, and calculated the ratio of the resulting MSE (MD$_0$ in Table~\ref{tab:rastrigin}) 
over that of the MD sampler with $p_{mx}=0.1$, the default setting. One sees an increase of two folds or more in 
MSEs without the mixed jump. The convergence of the MD sampler without the mixed jump became slower, reflected
by a five-fold increase in $\gamma_{n}$ after the same number of iterations,
averaging over 100 independent runs. These observations 
demonstrate that the mixed jump served as an efficient global move 
which accelerated convergence of the MD sampler and improved estimation accuracy. 

\section{Learning Bayesian networks}\label{sec:BN}

A Bayesian network (BN) factorizes the joint distribution of $m$ variables
$Z=\{Z_1,\ldots,Z_m\}$ into
\begin{equation}\label{eq:BNdef}
P(Z) = \prod_{i=1}^m P(Z_i \mid \Pi_i^{G}),
\end{equation}
where $\Pi_i^{G} \subset Z $ is the parent set of $Z_i$.
A graph $G$ is constructed to code the structure of a BN by connecting
each variable (node) to its child variables via directed edges. For (\ref{eq:BNdef}) to
be a well-defined joint distribution, the graph $G$ must be a DAG. 
We consider the use of Bayesian networks in causal inference (Spirtes, Glymour and Scheines 1993, Pearl 2000),
which is tightly connected to many areas in statistics, such as structural equations, potential outcomes,
and randomization (Holland 1988, Neyman 1990, Rubin 1978, Robins 1986). Here we follow
Pearl's formulation of causal networks by modeling experimental intervention. If $Z_j$ is a parent of
$Z_i$ in a causal Bayesian network, then experimental interventions that change the value
of $Z_j$ may affect the distribution of $Z_i$, but not conversely. Once all the parents of $Z_i$
are fixed by intervention, the distribution of $Z_i$ will not be affected by interventions on 
any variables in the set $Z\setminus(\Pi_i^{G} \cup \{Z_i\})$. 
In the example causal network of Figure~\ref{fig:examplenet}, 
if we fix $Z_1$ and $Z_3$ by experimental intervention,
then the distribution of $Z_4$ will not be affected by perturbations on $Z_2$, $Z_5$, or $Z_6$.

\begin{figure}[ht]
\centering
\includegraphics[width=1.5in]{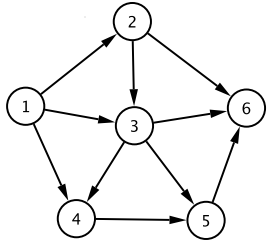}\\ 
\caption{An example Bayesian network of six variables.
\label{fig:examplenet}} 
\end{figure}

\subsection{Posterior distribution}

We focus on the discrete case where each $Z_i$ takes $r_i$ states indexed by $1, \ldots, r_i$
and the parents of $Z_i$ take $q_i = \prod_{Z_j \in \Pi_i^{G}} r_j$ joint states. 
Let $\theta_{ijk}$ be the causal probability for $Z_i=j$ given the 
$k$th joint state of its parent set. 
A causal BN with a given structure $G$ is parameterized
by ${\bmTheta} = \{\theta_{ijk} : \sum_j \theta_{ijk} =1, \theta_{ijk}\geq 0 \}$.

We infer network structure from two types of data jointly, experimental data 
and observational data. For experimental data, a subset of variables are known to be
fixed by intervention. Inferring causality with intervention has been extensively studied in various contexts 
(e.g., Robins 1986, 1987, Pearl 1993). 
We adopt Cooper and Yoo (1999) for calculating
the posterior probability of a network structure given a mix of experimental and observational data. 
Suppose that $N_{ijk}$ is the number of data points for which $Z_i$ is not fixed by intervention
and is found in state $j$ with its parent set in joint state $k$. Then, the collection of counts $\bfN = \{N_{ijk} \}$
is the sufficient statistic for $\bmTheta$ (Ellis and Wong 2008). 
Let $|\Pi_i^{G}|$ be the size of the parent set of $Z_i$.
The prior distribution over network structures is specified as 
$\pi(G) \propto \beta^{\sum_i |\Pi_i^{G}|}$, $\beta \in (0,1)$,
which penalizes graphs with a large number of edges. With a product-Dirichlet prior for
${\bmTheta}$, the posterior distribution $[G \mid \bfN]$ (Cooper and Herskovits 1992) is
\begin{equation}\label{eq:posteriorDAG}
P(G \mid \bfN)  \propto  \prod_{i=1}^m \left\{ \beta^{ |\Pi_i^{G}|} \prod_{k=1}^{q_i}
\left[ \frac{\Gamma(\alpha_{i\cdot k})}{\Gamma(\alpha_{i\cdot k}+N_{i\cdot k})}
\prod_{j=1}^{r_i} \frac{\Gamma(\alpha_{ijk}+N_{ijk})}{\Gamma(\alpha_{ijk})}
\right]
\right\}, 
\end{equation}
where $\alpha_{ijk} = \alpha / (r_i q_i)$ is the pseudo count for the causal probability $\theta_{ijk}$
in the product-Dirichlet prior and $N_{i\cdot k} = \sum_j N_{ijk}$ (similarly for $\alpha_{i\cdot k})$.
The hyperparameters in the prior distributions are chosen as $\beta=0.1$ and $\alpha=1$.

\subsection{MD sampling over DAGs}

The space of DAGs is discrete in nature. We define domains of attraction for $P(G \mid \bfN)$ (\ref{eq:posteriorDAG})
with a move set composed of addition, deletion and reversal of an edge.
Given a DAG $G_a$, we say that another DAG $G_b$ is a neighbor of $G_a$ if $G_b$ can be obtained
via a single move starting from $G_a$, i.e., by adding, deleting or reversing an edge of $G_a$. 
Denote by $ngb(G_a)$ all the neighbors of $G_a$ and let $\overline{ngb}(G_a) = ngb(G_a)\cup \{G_a\}$. 
A DAG $G^*$ is defined as a local mode of 
a probability density (mass) function $p(G)$ if $p(G^*)>p(G')$ for every $G' \in ngb(G^*)$. Let $G^0$ be a DAG
and define recursively 
\begin{equation}\label{eq:sna}
G^{t+1}=\underset{G\in \overline{ngb}(G^t)}{\amax} \,p(G), \mbox{ for $t=0,1,\ldots$.}
\end{equation}
That is, we recursively find the single move that leads to the greatest increase in
$p$ until a local mode is reached, which can be viewed as a discrete counterpart of
the gradient descent algorithm. If there are more than one maximum in (\ref{eq:sna})
with an identical function value, we take the first maximum according to a fixed ordering of the neighbors.
We call this recursion the steepest neighbor ascent (SNA). Based on SNA, we define
the domain partition index $I(G)$ and the attraction domains of $p$ in the same manner as
for a differentiable density (Equation \ref{eq:defI} and Definition \ref{def:Domain}). 

The target distribution in this application is the posterior distribution $P(G \mid \bfN)$ and
we define working density $P(G \mid \bfN; \bfW)$ similarly as in (\ref{eq:weightds}). 
To implement the MD sampler for DAGs, we employ the move set as the local
proposal and develop the following mixed jump. 
For a DAG $G$, we define an edge variable $E_{ij}^G$ for every pair of nodes $Z_i$ and $Z_j$ $(i<j)$ 
such that $E^G_{ij} = 1$ if $Z_i$ is a parent of $Z_j$, $E^G_{ij}=-1$ if $Z_j$ is a parent of $Z_i$,
and $E^G_{ij}=0$ otherwise.
Given a DAG $\nu$, let $\mathbf{C}(G; \nu)=(C_a(G;\nu),C_d(G;\nu),C_r(G;\nu))$ be a map of $G$,
where
\begin{equation*}
\begin{array}{lll}
C_a(G;\nu) & = & \sum_{i<j} \bm{1}(E^G_{ij}\ne 0, E^{\nu}_{ij}=0), \\
C_d(G;\nu) & =  &\sum_{i<j} \bm{1}(E^G_{ij}= 0, E^{\nu}_{ij}\ne 0), \\
C_r(G;\nu) & = &\sum_{i<j} \bm{1}(E^G_{ij}\cdot E^{\nu}_{ij}=-1).
\end{array}
\end{equation*}
In words, $\mathbf{C}(G;\nu)$ gives the numbers of additions, deletions and reversals needed
to obtain $G$ from $\nu$. Let $T=m(m-1)/2$ be the total number of node pairs and
$|E_{\nu}|$ be the number of edges in $\nu$. Then, the number of common edges between $G$ and $\nu$ is
$|E_{\nu}|-[C_r(G;\nu)+C_d(G;\nu)]$ and the number of node pairs with no edge in either DAG
is $T-|E_{\nu}|-C_a(G;\nu)$.
Given a set of local modes of $P(G \mid \bfN)$, $\{\nu_k\}_{k=1}^M$,
let $\bfg_k(G) = \mathbf{C}(G; \nu_k)$ in the update of $\bfV^t_k$ (\ref{eq:updateV}) of Algorithm~\ref{alg:main},
where $\bfV^t_k=(v^t_{k,a},v^t_{k,d},v^t_{k,r})$ is a vector. As $n\to \infty$, 
$\bfV^n_k \toas E[\mathbf{C}(G;\nu_k)\mid G \in D_k]$, 
where the expectation is taken with respect to the limiting working density $P(G \mid \bfN; \bfW^*)$ (\ref{eq:Vlimit}).
In the mixed jump, after a local mode $\nu_k$ is randomly chosen,
we sequentially modify the edge variables of $\nu_k$ to propose a new DAG $Y$.
The proposal is designed according to $\bfV^t_k$,
the current estimate of the expected numbers of additions, deletions and reversals
of DAGs in the domain $D_k$ relative to the mode $\nu_k$. 
Let $|E_k|$ be the number of edges in $\nu_k$.
If $E^{\nu_k}_{ij}\ne 0$, we propose to reverse, delete, or retain the edge $E^{\nu_k}_{ij}$
(i.e., $E^{Y}_{ij}= -E^{\nu_k}_{ij}, 0, \mbox{or } E^{\nu_k}_{ij}$) with probabilities proportional to the vector
$(v^t_{k,r},v^t_{k,d},|E_k|-(v^t_{k,r}+v^t_{k,d}))+b$, where $b>0$ is a small
prior count added to each category. Analogously, if $E^{\nu_k}_{ij}= 0$ we propose 
$E^{Y}_{ij}= 0,1, \mbox{or } -1$ with probabilities proportional to $(T-|E_k|-v^t_{k,a},v^t_{k,a}/2,v^t_{k,a}/2)+b$.
Lastly, to ensure a proposed graph is acyclic, a check for cycles is performed
when we propose to add or reverse an edge in either the local proposal or the mixed jump. If the resulting
graph is cyclic, we suppress the probability for the corresponding move.

Following the common practice in structural learning of discrete BNs, we set an upper bound
for the number of parents (indegree) of a node. 
In all the following examples and applications, this upper  bound is chosen to be four. 
We are interested in the posterior expected adjacency matrix $\bfA=(a_{ij})_{m\times m}$ and its 
domain-based representation,
where $a_{ij}$ $(1\leq i,j \leq m)$ is the posterior probability for a directed edge from $Z_i$ to $Z_j$. 
For each identified local mode $\nu_k$,
we estimate the probability mass $\lambda_k$ of its attraction domain $D_k$
and the conditional expected adjacency matrix $\bfA_k$ on the domain. 
Then, $\bfA$ is estimated by $\hat{\bfA} = \sum_k \hat{\lambda}_k \hat{\bfA}_k$.

\subsection{Simulation}

We simulated data from two BNs, each of
six binary variables ($m=6,r_i=2, \forall i$). This is the maximum number of nodes for which we can enumerate
all DAGs, numbering about four million, to obtain true posterior distributions and domain-based representations 
as the ground truth for testing a computational method. The first network has a chain structure
in which $Z_i$ is the only parent of $Z_{i+1}$ for $i=1,\ldots,5$ and  $Z_1$ has no parent.
The second network has a more complex structure shown in Figure~\ref{fig:examplenet}.
We simulated 50 datasets independently from each network. In each dataset, 20\% of the
data points were generated with interventions. Please see supplementary document for 
data simulation details.

The MD sampler was applied to the 100 datasets with $L=15$,
$\Delta H=10$, $p_{mx}=0.1$, $K^*=100$ and a total of 5M iterations with the first 50K
as burn-in iterations. To verify its performance, we compared identified local modes,
estimated probability masses $\{ \log \hat{\lambda}_k\}$, conditional expected adjacency
matrices $\{\hat{\bfA}_k\}$, and expected adjacency matrix $\hat{\bfA}$
to their respective true values obtained via enumerating all DAGs.
Our enumeration confirms that the posterior distributions indeed have multiple local modes.
The chain and the graph (Figure~\ref{fig:examplenet}) networks have on average 3.57 
and 7.06 modes over the simulated datasets, respectively, and the maximum number of modes
is 29 for the chain and 34 for the graph. As reported in Table~\ref{tab:BNsim}, the MD sampler
did not miss a single local mode for any dataset, which demonstrates its global search ability. 
Recall that all recorded modes are detected in the burn-in algorithm.
In fact, all modes, including the global mode, were identified within 10K iterations for every dataset. 
This observation confirms the notion that the burn-in algorithm alone may serve as a powerful
optimization algorithm (Remark~\ref{rmk:burninopt}).
We calculated the MSE of the vector $(\log \hat{\lambda}_1,\ldots,\log \hat{\lambda}_K)$,
where $K$ is the number of local modes, and the average MSE of $\hat{\bfA}_1,\ldots,\hat{\bfA}_K$.
When calculating the MSE of the log probability vector, we ignored those tiny
domains with a probability mass $<10^{-4}$. 
These estimates are seen to be very accurate as reported in Table~\ref{tab:BNsim}.

\begin{table}[t] 
\caption{Comparison on simulated data from two BNs} \label{tab:BNsim}
\begin{center}
   \begin{tabular}{ccccc} 
   \hline
 									&	MD		& MD$_0$		& WL & $K^*=10$ \\
									& MSE		& \multicolumn{3}{c}{RMSE} \\
	\hline
\# of missed modes			&  0		& 0		& 0     & 0.51\\ 
$\log\hat{\lambda}_k$ 		& 0.028  & 1.48		& 5.06 & 0.48\\
$\hat{\bfA}_k$						& 1.3e-4 & 1.65		& 11.0  & 0.99\\
$\hat{\bfA}$							& 1.3e-4  & 1.76	& 1.67 & 1.03\\
\hline
\# of missed modes			& 0 		& 0		& 0.12  & 2.06\\
$\log\hat{\lambda}_k$			& 0.029	& 401		& 1368 & 0.84\\
$\hat{\bfA}_k$						& 1.7e-4 & 2.96		& 13.6  & 1.01\\
$\hat{\bfA}$							& 1.5e-4 & 7.17		& 5.09  & 0.98\\
\hline
  \end{tabular}
\end{center}
The top and bottom panels report the results for the chain and the graph networks, respectively. 
 For each estimate, reported are the MSE of the MD sampler and the RMSEs (ratios) of the other methods
 relative to the MD sampler.
\end{table}

We also applied the MD sampler with $p_{mx}=0$ (MD$_0$) and the WL algorithm
with the same parameter setting to these datasets (Table~\ref{tab:BNsim}). 
The degraded performance of MD$_0$ demonstrates 
the effectiveness of the mixed jump for sampling DAGs. The WL algorithm
missed 0.12 modes on average for the second network, and its estimation of the DR $\{(\hat{\bfA}_k, \hat{\lambda}_k)\}$ 
was much less accurate compared to that of the MD sampler.
The average MSE of $\hat{\bfA}_1,\ldots,\hat{\bfA}_K$ and the MSE of $(\log \hat{\lambda}_k)_{1:K}$ 
were more than 10 and 1,000 times greater than those of the MD sampler, respectively.
The huge MSE of the $(\log\hat{\lambda}_k)$ constructed by the WL algorithm was often
due to severe underestimation of the probability masses of domains sampled insufficiently.
This result implies that without domain partitioning, the WL algorithm 
is unable to estimate domain-based representations for a BN of
a moderately complicated structure.
Since the number of local modes often increases very fast with the complexity of a problem,
we re-applied the MD sampler with $K^*=10$ to investigate the effect of keeping only a subset of local modes.
Obviously, the algorithm missed a few local modes when the total number of modes exceeded $K^*$.
But in terms of estimating $\bfA$ and $\bfA_k$, the performance of the MD sampler with $K^*=10$
was very comparable to its performance when all the local modes were kept (Table~\ref{tab:BNsim}).
The probability mass outside the domains of recorded modes, $\lambda_0 = 1-\sum_{k=1}^{K^*} \lambda_k$,
is less than $0.007$ averaging over the 15 datasets where the posterior distributions have
more than 10 local modes. This confirms that the MD sampler indeed captured
major modes in the burn-in period.

\section{Protein-signaling networks}\label{sec:signaling}

\subsection{Background and data}

The ability of cells to properly respond to environment 
is the basis of development, tissue repair, and immunity.
Such response is established via information flow along signaling pathways
mediated by a series of signaling proteins.
Cross-talks and interplay between pathways
reflect the network nature of the interaction among these signaling molecules. Construction
of signaling networks is an important step towards a global understanding of normal cellular responses
to various environmental stimuli and more effective treatment of complex diseases caused by
malfunction of components in a pathway.
Causal Bayesian networks may be used for modeling signaling networks
as the relation among pathway components has a natural causal interpretation. That is,
the activation or inhibition of a set of upstream molecules in a network
causes the state change of downstream molecules. 
An edge from molecule A to molecule B in a signaling network implies that a change in the state of A
causes a change in the state of B via a direct biochemical reaction.
Here, a state change refers to a chemical, physical or locational modification of a molecule.
However, as there may exist mutual regulation between two signaling molecules, the use
of DAGs for modeling signaling networks is only a first-step approximation.

In this study, we construct protein-signaling networks
from flow cytometry data. Polychromatic flow cytometry is a high-throughput 
technique for probing simultaneously the (phosphorylation) states of multiple proteins in a single cell. Since
measurements are collected on a cell-by-cell basis, huge amounts of data can be produced in one
experiment. Sachs et al. (2005) made flow cytometry measurements of 11 proteins and phospholipids 
in the signaling network of human primary naive CD4$^+$ T cells under nine different experimental
perturbations that either activate or inhibit a particular molecule or activate the entire pathway.
Note that a perturbation that activates or inhibits a particular molecule is essentially an
intervention on the molecule, so that causal structures of the underlying network may be inferred. 
Under each perturbation, 600 cells were collected with 11 measurements for each.
The measurements in the data were discretized into three levels, high, medium, and low by Sachs et al.
In summary, this dataset contains 5,400 data points for 11 ternary variables.
Since naive T cells are essential for the immune system to continuously respond to unfamiliar pathogens,
extensive studies have been conducted to establish the signaling pathways.
An annotated signaling network
among the 11 molecules, provided by Sachs et al., 
is depicted as a causal Bayesian network in Figure~\ref{fig:cytonet}. This network
contains 18 edges that are well-established in the literature and two edges (PKC $\to$ PKA and Erk $\to$ Akt) reported
from recent experiments independent of the flow cytometry data.
 
\begin{figure}[th]
\centering
\includegraphics[width=3in]{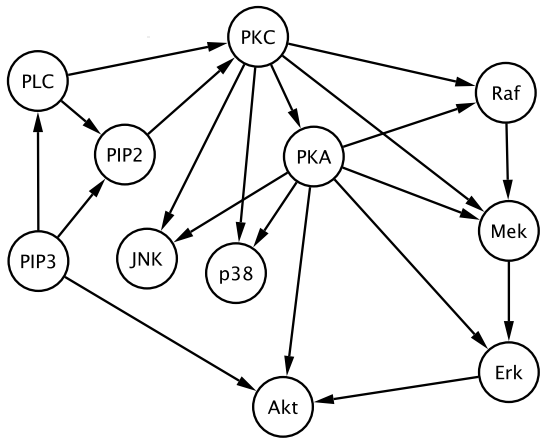}\\ 
\caption{An annotated protein-signaling network in naive CD4$^+$ T cells.
\label{fig:cytonet}} 
\end{figure}

\subsection{Predicted networks}

The MD sampler was applied to this dataset with $L=20$, $\Delta H=10$, $p_{mx}=0.1$, and
$K^*=10$. The total number of iterations was 5M, of which the first 50K were used for burn-in.
We estimated the posterior expected adjacency matrix $\bfA$ and its domain-based representation.
Three predicted networks were constructed by thresholding posterior edge probabilities at $c=0.5, 0.7, 0.9$,
i.e., an edge from $Z_i$ to $Z_j$ was predicted if the edge probability $\hat{a}_{ij}\geq c$.
For simplicity we call such a predicted network a mean network (with a threshold $c$).
Table~\ref{tab:cyt} (top panel) reports the number of true positive edges (TP) that are both 
predicted by the MD sampler and annotated in Figure~\ref{fig:cytonet}
and the number of false positive edges (FP) that are predicted but not annotated, 
together with the (unnormalized) log posterior probability of the identified global maximum DAG. 
To compare the results, we re-applied the MD sampler with the 
same parameters except that $p_{mx}=0$ (MD$_0$ in Table~\ref{tab:cyt}) and applied
the WL algorithm with the same parameters as used in the MD sampler to the same data.
The average result over 20 independent runs of each method is summarized in Table~\ref{tab:cyt}.
In terms of finding the global mode, the MD sampler was much more effective and
robust than the other two algorithms, reflected by a much higher average log probability and
a much smaller standard deviation across multiple runs.
The MD sampler with or without the mixed jump showed comparable results in predicting network structures,
and both predicted more true positives and fewer false positives than 
the WL algorithm did for all the three thresholds.
We noticed that the mean networks constructed with different thresholds ($c=0.5,0.7,0.9$) were almost identical.
This was due to the fact that the posterior edge probabilities were close to either 1 or 0
because of the large data size. 
The network constructed from the same data by
the order-graph sampler, reported in Figure 11 of Ellis and Wong (2008), has
9 true positive and 11 false positive edges, which misses much more true edges
and includes more false positives than the networks predicted by the MD sampler.
These results demonstrate that the MD sampler is very powerful in learning underlying
network structures from experimental data compared to other advanced Monte Carlo techniques.

\begin{table}[t] 
\caption{Results on the flow cytometry data} \label{tab:cyt}
\begin{center}
   \begin{tabular}{cccc} 
   \hline
 			&MD		& MD$_0$		& WL  \\
	\hline
Global max (SD)	&$-31757.9$(2.7)	&$-31787.3$(82)	& $-31937.8$(199) \\ 
TP/FP($c=0.5$) 		&15.50/10.35		&15.55/10.35		& 13.40/12.35 \\
TP/FP($c=0.7$) 		&15.50/10.35		&15.55/10.35		& 13.35/12.25 \\
TP/FP($c=0.9$) 		&15.50/10.35		&15.55/10.35		& 13.35/11.95 \\
\hline
TP/FP($c=0.5$) 		&15.2/10.3		&14.6/10.6		& 11.6/13.6 \\
TP/FP($c=0.7$) 		&15.2/10.2		&14.6/10.6		& 11.6/13.2 \\
TP/FP($c=0.9$) 		&15.2/9.9		&14.6/10.2		& 11.6/13.1 \\
Log pred (mean)	&0				& $-1.4$ 		&$-62.1$ \\ 
Log pred (DR)		&17.5 		&16.2  		    	&$-51.7$ \\
\hline
  \end{tabular}
\end{center} 
The top and bottom panels report the average results over 20 independent runs 
on the full dataset and the average results over ten test datasets in cross validation, respectively. 
Predictive probabilities (Log pred) are reported as log ratios over the predictive probability 
given the mean network of the MD sampler.
\end{table}

Next, we focus on the estimation of the DR, 
$\{(\hat{\bfA}_k,\hat{\lambda}_k): k=0,\ldots,K^*\}$, and its scientific implications.
A network $\hat{G}_k$ can be
constructed for an attraction domain by thresholding $\hat{\bfA}_k$,
the conditional expected adjacency matrix on the domain $D_k$, for $k=1,\ldots,K^*$.
To distinguish it from the mean network, we call $\hat{G}_k$ a local network.
We take the result of a representative run of the MD sampler ($p_{mx}=0.1$)
to demonstrate local networks with the threshold $c=0.9$.
The parent sets of eight nodes are identical across the $K^*=10$ local networks.
We report in Table~\ref{tab:cytDR} the parents of the other three nodes, PLC, PIP3, and Erk,
which are distinct among the local networks, together with the probability masses ($\log\hat{\lambda}_k$) 
of the 10 domains and the
probabilities of the local modes [$\log P(\nu_k \mid \bfN)$]. 
The local networks may predict meaningful alternative edges not included in the mean network,
as illustrated by the result on a particular pathway, Raf $\to$ Mek $\to$ Erk (Figure~\ref{fig:cytonet}). 
This expected pathway was predicted by all the local networks
and the mean network. However, some local networks also contained a direct link from
Raf to Erk (Table~\ref{tab:cytDR}). 
As Mek was inhibited in one of the experimental conditions, this finding suggests that the cells may have
another pathway that passes the signal from Raf to Erk via some indirect regulation or via molecules
not included in this analysis, when Mek is not functioning properly. 
Such compensational mechanisms exist widely in many biological networks. 
Indeed, Raf has been reported to enhance the kinase activity of PKC$\theta$,
an isoform of PKC, although PKC$\theta$ is unlikely a direct phosphorylation target of Raf (Hindley and Kolch 2007).
As indicated by Figure~\ref{fig:cytonet}, Erk is a downstream node of PKC
and thus may be regulated indirectly by Raf via the enhanced kinase activity of PKC$\theta$.
Such novel hypotheses could not be proposed if we did not construct the DR for the posterior distribution.
Clearly, the DR
of network structures not only gives a detailed landscape of various local domains
but also provides new insights into the underlying scientific problem.

From Table~\ref{tab:cytDR} we find that the probability mass is dominated by the domain of the identified global
mode with a log probability of $-31757$. Consistent with the summary in Table~\ref{tab:cyt},
the MD sampler almost always reached this global mode for different runs.
On the contrary, the highest mode detected by the WL algorithm with an average log probability of $-31938$
is even much lower than the lowest mode in Table~\ref{tab:cytDR}. In other words,
the WL algorithm was inevitably trapped to some local modes with negligible probability masses.
This again demonstrates the advantage of the MD sampler, particularly the burn-in algorithm, 
in finding global modes.
Even when the probability mass of the global mode is dominant and
other domains occupy only a small fraction of the sample space, without the domain-partitioning
design the WL algorithm may be trapped to a local mode of a tiny probability mass and produce severely 
biased estimates. 

\begin{table}[t]
\caption{Local networks constructed from domain-based representation} \label{tab:cytDR}
\centering
   \begin{tabular}{cc|lll} 
   \hline
   									& 							&	\multicolumn{3}{c}{Parents of} \\
 	$\log\hat{\lambda}_k$		&$\log P(\nu_k \mid \bfN)$		& PLC	& PIP3 & Erk  \\
	\hline
$-$0.00274 &$-$31756.81 & PKC p38 JNK   & PLC  PKC     & Mek PKA PKC  \\
$-$6.39777 &$-$31762.75 & PKC p38 JNK   & PLC  PKC     &  Mek PKA Raf  \\
 $-$7.12166 &$-$31764.16 & Mek  PKA      & PLC  PKC     & Mek PKA PKC  \\
$-$8.34521 &$-$31764.82 & PKC p38 JNK   & PLC JNK     & Mek PKA PKC  \\
$-$10.6461 & $-$31766.81 & Mek  Akt p38   & PLC  PKC     & Mek PKA PKC   \\
$-$13.7193 &$-$31770.11 & Mek  Akt  PKA   & PLC  PKC     &  Mek PKA Raf  \\
$-$14.0777 &$-$31770.77 & PKC p38 JNK   & PLC JNK     &  Mek PKA Raf  \\
$-$15.0459 &$-$31772.18 & Mek  Akt  PKA   & PLC JNK     & Mek PKA PKC  \\
$-$16.8860 &$-$31772.76 & Mek  Akt p38   & PLC  PKC     &   Mek PKA Raf \\
$-$18.5716 &$-$31774.82 & Mek  Akt p38   & PLC JNK     & Mek PKA PKC  \\
\hline
  \end{tabular}
\end{table}

\subsection{Cross validation}

To check the statistical variability and the predictive power of our method,
we conducted ten-fold cross validation on this dataset. We partitioned randomly the 5,400 data
points into ten subsets of equal sizes. We used nine subsets as training data to learn a mean network 
and calculated the predictive probability of data points in the other subset (test data) given the
learned mean network.
This procedure was repeated 10 times to test on every subset. 
We verified the average accuracy of the mean networks constructed from the 10 training datasets with different thresholds. 
The performance of the MD sampler on the training datasets was comparable
to its performance on the full dataset, which implies its robustness to random sampling of input data.
The improvement in accuracy (TP/FP) of the MD sampler over the other two methods became
more significant, especially compared to the WL algorithm (Table~\ref{tab:cyt}, bottom panel).
The average predictive probability of the test datasets 
given the mean networks constructed 
from the training datasets by each method with $c=0.9$
is reported in Table~\ref{tab:cyt} [Log pred (mean)], 
from which we see that the predictive power of the networks constructed by the two
MD samplers was much higher than that of the WL algorithm ($>60$ in log probability ratio). 
In addition, we utilized the estimated domain-based representations
to calculate the predictive probability of a test data point $\bfy$ by
\begin{equation}\label{eq:DRgraph}
P(\bfy \mid \{\hat{G}_k,\hat{\lambda}_k\}) \defi \sum_{k=0}^{K^*} \hat{\lambda}_k P(\bfy \mid \hat{G}_k),
\end{equation}
where $P(\bfy \mid \hat{G}_k)$ is the marginal likelihood of $\bfy$ given the local network $\hat{G}_k$.
This can be regarded as a domain-based approximation to the posterior predictive distribution
\begin{equation*}
P(\bfy \mid \yobs) = \sum_G P(G \mid \yobs) P(\bfy \mid G),
\end{equation*}
i.e., we use estimated probability masses $\hat{\lambda}_k$ and conditional mean networks $\hat{G}_k$ 
to approximate the posterior predictive probability. The advantage is that there is no need to store
a large posterior sample of networks but only an estimated DR. Since Equation \eqref{eq:DRgraph} captures
the variability among different domains, it is expected to outperform the mean network in prediction.
In fact, for each method the predictive probability calculated by (\ref{eq:DRgraph}) [Table~\ref{tab:cyt}, Log pred (DR)]
was indeed significantly greater than the predictive probability calculated given its mean network,
especially for the two MD samplers.

In real applications, we are interested in predicting results for a new experimental condition
given observed data from other conditions. Thus, we also performed a nine-fold cross validation
where a training dataset was composed of cells from eight experimental conditions and a test dataset
only included cells in the other one condition. We applied the MD sampler to construct mean networks 
and DRs $\{\hat{G}_k,\hat{\lambda}_k\}_{k=1}^{10}$ from the training datasets.
The mean networks with $c=0.9$ included, on average, 13.2 true edges with 10.8 false edges,
which was slightly worse than the result from the ten-fold cross validation.
The degraded performance is expected as removing
all cells from one experimental perturbation will increase the uncertainty in determining the directionality of
the network. The domain-based prediction (\ref{eq:DRgraph}) 
was compared against the annotated network $G^*$ given in
Figure~\ref{fig:cytonet}, which presumably has the highest predictive power, by evaluating the log likelihood ratio (LLR)
$\log R = \log [P(\bfy \mid \{\hat{G}_k,\hat{\lambda}_k\})/P(\bfy \mid G^*)]$, where $\bfy$ is a test data point.
The average $\log R$ over all test data points was $-0.062$, and thus the predictive
probability for a new observation given the constructed DR is expected to be higher than $94\% (=e^{-0.062})$
of its likelihood given the annotated graph. This demonstrates the high predictive power of the domain-based
prediction constructed by our method. As expected, the average LLR of the mean networks over $G^*$ was 26\%
lower than the average of $\log R$.

\section{Discussion}

The central idea of this article is to construct domain-based representations with
the MD sampler. Related works have been seen in the physics literature under the name
of superposition approximation. Please see Wales and Bogdan (2006) for a recent review.
Given a Boltzmann distribution $p_B(\bfx;\tau) \propto \exp[-H(\bfx)/\tau]$, a superposition approach
identifies the local minima of $H(\bfx)$, i.e., the local modes of $p_B(\bfx;\tau)$, and approximates $H(\bfx)$
on the attraction domain of a local minimum by quadratic or high-order functions.
The approximation is often proposed based on expert knowledge about the physical model under study.
Expectations with respect to $p_B(\bfx;\tau)$ are then estimated by summing over approximations
from identified domains. The accuracy of this approach largely depends on the employed 
approximation to $H(\bfx)$ on a domain and thus may not work well for an arbitrary distribution. 
The MD sampler differs in that domain-based
representations are constructed by Monte Carlo sampling which is able to provide accurate
estimation with large-size samples; no expert knowledge about the target distribution is needed.
In addition, our method also contains a coherent component for
finding local modes, while the superposition approximation works more like a two-step approach.

From a computational perspective, the MD sampler integrates Monte Carlo and deterministic
optimization. A few other methods also have the two ingredients, such as
Monte Carlo optimization (Li and Scheraga 1987), the basin hopping algorithm (Wales and Doye 1997), 
and conjugate gradient Monte Carlo (CGMC) (Liu, Liang, and Wong 2000). 
In Monte Carlo optimization and the basin hopping algorithm, 
the target distribution $p(\bfx)$ is modified to $\tilde{p}(\bfx) = p(\bmnu_k)$ for all $\bfx \in D_k$,
where $D_k$ is the attraction domain of the mode $\bmnu_k$. Then a Metropolis-type MCMC is used
to sample from $\tilde{p}$, in which a local optimization algorithm is employed at each iteration to
find $\tilde{p}(\bfX^t)$ for the current state $\bfX^t$. 
These methods have been applied to identification of minimum-energy
structures of proteins and other molecules. However, its application to other fields is
limited as the modified density $\tilde{p}(\bfx)$ may be improper when the sample space is unbounded.
In CGMC, a population of samples is evolved and 
a line sampling step (Gilks, Roberts, and George 1994) is performed on a sample 
along a direction pointing to a local mode found by local optimization initiated at another sample.
In this way promising proposal may be constructed 
by borrowing local mode information from other samples. A possible future work on the MD sampler
is to utilize a population of samples. Because local modes are recorded, similar proposals as the line
sampling can be developed for the MD sampler to further enhance sampling effectiveness.
Another future direction is to construct disconnectivity graphs (Becker and Karplus 1997) or 
trees of sublevel sets (Zhou and Wong 2008) from samples generated by the MD sampler.
Since samples have been partitioned into domains of attraction, 
we only need to determine the barrier between a pair of domains, defined by
$\max_{s \in \mathcal{S}}\min_{\bfx\in s} p(\bfx)$, where $\mathcal{S}$ is the collection of 
all the paths between the two domains.
A few candidate approaches towards this direction are under current investigation (Zhou 2011).

\section*{Appendix: Theoretical analysis}\label{sec:app}

In this Appendix we establish the convergence and ergodicity properties of the MD sampler.
Our analysis is conducted for a doubly adaptive MCMC, i.e.,
both the target distribution and the proposal may change along the iteration,
which includes the MD sampler as a special case.
Furthermore, the MWL design (Routine \ref{MWLdesign}) is employed to adjust $\gamma_t$.

Assume that the sample space $\calX$ is equipped with a countably generated $\sigma$-field, $\calB(\calX)$.
Let $\{\calX_i\}_{i=1}^{\kappa}$ be a partition of $\calX$, where each $\calX_i$ is nonempty, and
$\calB(\calX_i)=\{A \in \calB(\calX): A \subseteq \calX_i\}$ for $i=1,\ldots,\kappa$.
Let $\bmomega=(\omega_i)_{1:\kappa} \in \Omega$ and $\bmphi \in \Phi$ be two 
vectors of real parameters. Denote the product parameter space by
$\Theta = \Omega \times \Phi$ and write $\bmtheta = (\bmomega,\bmphi) \in \Theta$.
For $\bmomega \in \Omega$, define working density 
\begin{equation}\label{eq:pomega}
p_{\bmomega}(\bfx) \propto \sum_{i=1}^{\kappa} e^{-\omega_i} p(\bfx) \I(\bfx \in \calX_i).
\end{equation}
Let $q(\bfx,\cdot)$ and $t_{\bmphi}(\bfx,\cdot)$, $\bmphi \in \Phi$, be two transition kernels on $(\calX, \calB(\calX))$.
Hereafter, the same notation will be used for a kernel and its density
with respect to the Lebesgue measure on $\calX$, e.g., $q(\bfx,d\bfy)\equiv q(\bfx,\bfy)d\bfy$.
For $j = 0,1$, define $Q_{j,\bmphi}(\bfx,\cdot) = (1-j) q(\bfx,\cdot) + j\, t_{\bmphi}(\bfx,\cdot)$.
Let $K_{j,\bmtheta}$ be the MH transition kernel with $p_{\bmomega}$ as the target distribution
and $Q_{j,\bmphi}$ as the proposal, i.e.,
\begin{equation*}
K_{j,\bmtheta}(\bfx,d\bfy) = S_{j,\bmtheta}(\bfx,d\bfy) + \I (\bfx \in d\bfy)\left[1-\int_{\calX} S_{j,\bmtheta}(\bfx,d\mathbf{z}) \right],
\end{equation*}
where $S_{j,\bmtheta}(\bfx,d\bfy)=Q_{j,\bmphi}(\bfx,d\bfy)\min[1,p_{\bmomega}(\bfy)Q_{j,\bmphi}(\bfy,\bfx)/p_{\bmomega}(\bfx)Q_{j,\bmphi}(\bfx,\bfy)]$,
representing an accepted move.
As $Q_{0,\bmphi}=q$, $K_{0,\bmtheta}$ and $S_{0,\bmtheta}$ do not depend on $\bmphi$
and thus reduce to $K_{0,\bmomega}$ and $S_{0,\bmomega}$, respectively. 
Furthermore, if we let $\bmomega=\mathbf{0}$ then $p_{\bmomega}(\bfx)=p(\bfx)$, in which case
we simply use $K_0$ and $S_0$. 
Given $\alpha \in [0,1)$, define a mixture proposal $Q_{\bmphi} = (1-\alpha) q + \alpha t_{\bmphi}$,
its accepted move $S_{\bmtheta} = (1-\alpha) S_{0,\bmomega} + \alpha S_{1,\bmtheta}$,
and the corresponding MH kernel $K_{\bmtheta} = (1-\alpha) K_{0,\bmomega} + \alpha K_{1,\bmtheta}$.
Table~\ref{tab:notation} summarizes the notations, from left to right, 
for target distributions, proposals, MH kernels, and accepted moves 
for different scenarios involved in this analysis.

\begin{table}[h]
\caption{Summary of notations\label{tab:notation}}
\centering
\begin{tabular}{rcccc} 
\hline
mixture : & $p_{\bmomega}$ & $Q_{\bmphi}$ & $K_{\bmtheta}$ &  $S_{\bmtheta}$ \\ 	
$j=0$ :   & $p_{\bmomega}$  &  $q$ & $K_{0,\bmomega}$ &  $S_{0,\bmomega}$ \\
$j=0,\bmomega=\mathbf{0}$ : & $p$ &  $q$  & $K_{0}$ &  $S_{0}$ \\
$j=1$ : & $p_{\bmomega}$  &  $t_{\bmphi}$ & $K_{1, \bmtheta}$ &  $S_{1, \bmtheta}$ \\
\hline
\end{tabular}
\end{table}

Denote by $Z(\bmomega)$ the normalizing constant of \eqref{eq:pomega}.
Then $Z(\bmomega)=\sum_{i=1}^{\kappa} Z_i(\omega_i)$, where
$Z_i(\omega_i)=e^{-\omega_i}\int_{\calX_i} p(\bfx)d\bfx$. 
Let $U$ be a nonempty subset of $\{1,\ldots,\kappa\}$ and $\calX_U = \bigcup_{i\in U}\calX_i$.
Given a map $\bfg: \calX \to \Phi$, let $\bmmu_{\bfg,U}(\bmomega)=E[\bfg(\bfX)\mid \bfX\in \calX_U]$ 
with respect to $p_{\bmomega}$.
Define a map $\bfH: \Theta \times \calX \to \Theta$ by
\begin{equation*} 
\bfH(\bmtheta,\bfx) = [(\I(\bfx \in \calX_i)-1/\kappa)_{1:\kappa}, (\bfg(\bfx)-\bmphi)\I(\bfx\in\calX_U)]
\end{equation*}
and the mean field $\bfF(\bmtheta) = \int_{\calX} \bfH(\bmtheta,\bfx)p_{\bmomega}(\bfx) d\bfx$, i.e.,
\begin{equation*} 
\bfF(\bmtheta)= \left[\left(\frac{Z_i(\omega_i)}{Z(\bmomega)}-\frac{1}{\kappa}\right)_{1:\kappa}, 
\frac{\sum_{u\in U}Z_u(\omega_u)}{Z(\bmomega)}(\bmmu_{\bfg,U}(\bmomega) - \bmphi) \right].
\end{equation*}
Consider the equation $\bfF(\bmtheta)=\mathbf{0}$.
Let $\bmomega^*= [\log Z_i(0)]_{1:\kappa}$ and $\bmphi^*=\bmmu_{\bfg,U}(\bmomega^*)$. 
As $\bfF(\bmtheta)$ is invariant to translation of $\bmomega$ by a scalar:
$\bmomega \to (\bmomega+\beta) \defi (\omega_i+\beta)_{1:\kappa}$, $\beta \in \R$,
the solution set to this equation is $\{\bmtheta^*(\beta)\defi (\bmomega^*+\beta,\bmphi^*)\}\cap \Theta \defi \Theta^*$.
Set $\gamma_1=1$ and choose an arbitrary point $(\tilde{\bfx},\tilde{\bmtheta}) \in \calX \times \Theta$ to initialize 
$(\bfX^1,\bmtheta^1)$.
A doubly adaptive MCMC is employed to find a solution $\bmtheta^*(\beta) \in \Theta^*$ 
and to estimate $\mu_h(\bmomega^*)=\int_{\calX} h(\bfx)p_{\bmomega^*}(\bfx)d\bfx$ for a function $h:\calX \to \R$.
\begin{algorithm}[Doubly adaptive MCMC]\label{alg:damcmc}
Choose a fixed $\alpha\in [0,1)$. 
For $t=1,\ldots,n$:
\begin{enumerate}
\item{If $\bmtheta^t \notin \Theta$, set $\bfX^{t+1}=\tilde{\bfx}$ and $\bmtheta^{t+1}=\tilde{\bmtheta}$;
otherwise draw $\bfX^{t+1}\sim K_{\bmtheta^t}(\bfX^t,\cdot)$ 
and set $\bmtheta^{t+1}= \bmtheta^t + \gamma_t \bfH(\bmtheta^t,\bfX^{t+1})$.}
\item{Determine $\gamma_{t+1}$ by the MWL design in Routine~\ref{MWLdesign} with $\{\calX_i\}$ in place of
$\{D_{kj}\}$.}
\end{enumerate}
\end{algorithm}

Denote the $L_2$ norm by $|\cdot|$ and let $d(\bfx,A)\defi \inf_{\bfy\in A}|\bfx-\bfy|$, where $\bfx,\bfy$ are vectors 
and $A$ is a set.
Our goal is to establish that $d(\bmtheta^n, \Theta^*) \to 0$ 
almost surely (with respect to the probability measure of the 
process $\{\bfX^t,\bmtheta^t\}$) and that $\{\bfX^t\}$ is ergodic.
Clearly, translation of $\bmomega^t$ by a scaler does not change the working density $p_{\bmomega^t}$ or
affect the convergence of $\bmtheta^t$ to $\Theta^*$.
Thus, the theory for Algorithm~\ref{alg:damcmc} can be applied to the MD sampler with reinitialization
(Remark~\ref{rmk:diagnostics}).
The update of $\bmomega^t$, up to translation by a scalar, and the update of $\bmphi^t$ 
correspond to, respectively, the update of $\bfW^t$ \eqref{eq:updatew} and
the update of $\bfV_k^t$ \eqref{eq:updateV} for any $k$ in the MD sampler.
We state four conditions for establishing the main results.
\begin{enumerate}
\item[(C1)]{The sample space $\calX$ is compact, $p(\bfx) >0$ for all $\bfx\in \calX$, $\Theta$ is bounded,
and $\Theta^*$ is nonempty.
The map $\bfg$ and the function $h$ are $p$-integrable and bounded.}
\item[(C2)]{There exist $\delta_q >0$ and $\epsilon_q >0$ such that $|\bfx-\bfy| \leq \delta_q$ implies that
$q(\bfx,\bfy) \geq \epsilon_q$ for all $\bfx,\bfy \in \calX$.}
\item[(C3)]{There exist an integer $\ell$, $\delta>0$ and a probability measure $\nu$, such that
$\nu(\calX_i)>0$ for $i=1,\ldots,\kappa$ and
$S^{\ell}_0(\bfx,A) \geq \delta \nu (A)$, $\forall \bfx \in \calX$ and $A \in \calB(\calX)$.
}
\item[(C4)]{For all $\bfx,\bfy\in\calX$ and all $\bmphi \in \Phi$, 
$t_{\bmphi}(\bfx,\bfy)>0$ and $\log t_{\bmphi}(\bfx,\bfy)$ has continuous partial derivatives with
respect to all the components of $\bmphi$.}
\end{enumerate}

To avoid mathematical complexity, we assume that $\calX$ is compact (C1). This assumption does not lose
much generality in practice as we may always restrict the sample space to 
$\{\bfx: p(\bfx)\geq \epsilon_p \}$ given an sufficiently small $\epsilon_p$. Due to the compactness of $\calX$,
any continuous map and function on $\calX$ will be bounded. Conditions (C2, C3) 
are standard conditions on the fixed proposal $q(\bfx,\bfy)$ to guarantee irreducibility and aperiodicity of 
the MH kernel $K_0$.
They are satisfied by all the local moves used in this article. A regularity
condition on the adaptive proposal $t_{\bmphi}$ is specified in (C4). For the mixed jumps in the examples, $\bmphi$ is either 
the covariance matrix of a multivariate normal distribution or the cell probability vector
of a multinomial distribution, and (C4) is satisfied.

\begin{lemma}\label{lemma1}
Let $\alpha \in [0,1)$. For any $i,j \in \{1,\ldots,\kappa\}$ and $\bmtheta \in \Theta$, 
if $e^{\omega_i-\omega_j}\geq c_1\in(0,1]$ then $K_{\bmtheta}(\bfx,A) \geq (1-\alpha) c_1 S_0(\bfx,A)$,
$\forall \bfx \in \calX_i$ and $A \in \calB(\calX_j)$.
\end{lemma}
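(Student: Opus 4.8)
The plan is to peel the mixture MH kernel apart in three layers, reducing the assertion to an elementary inequality for the minimum function. Throughout, fix $i,j$, fix $\bmtheta=(\bmomega,\bmphi)\in\Theta$ with $e^{\omega_i-\omega_j}\geq c_1$, and take $\bfx\in\calX_i$ and $A\in\calB(\calX_j)$.

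First I would discard the adaptive part of the mixture. Since $K_{\bmtheta}=(1-\alpha)K_{0,\bmomega}+\alpha K_{1,\bmtheta}$ and $K_{1,\bmtheta}(\bfx,A)\geq 0$, one immediately has $K_{\bmtheta}(\bfx,A)\geq(1-\alpha)K_{0,\bmomega}(\bfx,A)$. Next I would replace the full MH kernel $K_{0,\bmomega}$ by its accepted-move part $S_{0,\bmomega}$. From the decomposition $K_{0,\bmomega}(\bfx,d\bfy)=S_{0,\bmomega}(\bfx,d\bfy)+\I(\bfx\in d\bfy)[1-\int_{\calX}S_{0,\bmomega}(\bfx,d\mathbf{z})]$ and the fact that the rejection mass $1-\int_{\calX}S_{0,\bmomega}(\bfx,d\mathbf{z})$ is a (nonnegative) probability, the atom at $\bfx$ contributes only nonnegative mass to $A$, so $K_{0,\bmomega}(\bfx,A)\geq S_{0,\bmomega}(\bfx,A)$ for every measurable $A$.

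The crux is the remaining bound $S_{0,\bmomega}(\bfx,A)\geq c_1 S_0(\bfx,A)$. Here both accepted-move kernels use the same fixed proposal $q$ (which does not depend on $\bmomega$), so they differ only through the acceptance probability. For $\bfx\in\calX_i$ and $\bfy\in A\subseteq\calX_j$ the normalizing constant $Z(\bmomega)$ of $p_{\bmomega}$ cancels in the ratio, leaving
\[
\frac{p_{\bmomega}(\bfy)}{p_{\bmomega}(\bfx)}=e^{\omega_i-\omega_j}\,\frac{p(\bfy)}{p(\bfx)}.
\]
Writing $R=p(\bfy)q(\bfy,\bfx)/[p(\bfx)q(\bfx,\bfy)]$ for the base MH ratio, the acceptance factor under $p_{\bmomega}$ is $\min(1,e^{\omega_i-\omega_j}R)$, while that under $p$ is $\min(1,R)$. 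Since $e^{\omega_i-\omega_j}\geq c_1$ and $c_1\in(0,1]$, we have $1\geq c_1\cdot 1$ and $e^{\omega_i-\omega_j}R\geq c_1R$; invoking the elementary fact that $\min(x,y)\geq c\min(u,v)$ whenever $c\geq 0$, $x\geq cu$ and $y\geq cv$, we obtain $\min(1,e^{\omega_i-\omega_j}R)\geq c_1\min(1,R)$. Integrating this pointwise bound against the common factor $q(\bfx,d\bfy)$ over $A$ gives $S_{0,\bmomega}(\bfx,A)\geq c_1 S_0(\bfx,A)$, and chaining the three inequalities yields $K_{\bmtheta}(\bfx,A)\geq(1-\alpha)c_1 S_0(\bfx,A)$.

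I expect the only delicate point to be the clean factorization of the acceptance ratio, which relies essentially on $\bfx$ and $A$ each lying in a single partition cell ($\calX_i$ and $\calX_j$, respectively) so that $p_{\bmomega}$ contributes just the two scalar weights $e^{-\omega_i}$ and $e^{-\omega_j}$ and the proposal ratio $q(\bfy,\bfx)/q(\bfx,\bfy)$ is untouched by $\bmomega$; everything else reduces to the min inequality above.
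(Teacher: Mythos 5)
Your proof is correct and follows essentially the same route as the paper's: drop the adaptive mixture component ($K_{\bmtheta}\geq(1-\alpha)K_{0,\bmomega}$), drop the rejection atom ($K_{0,\bmomega}\geq S_{0,\bmomega}$), and then bound the acceptance probability via $p_{\bmomega}(\bfy)/p_{\bmomega}(\bfx)=e^{\omega_i-\omega_j}p(\bfy)/p(\bfx)$ together with $\min(1,e^{\omega_i-\omega_j}R)\geq c_1\min(1,R)$. You merely spell out the elementary min inequality and the nonnegativity of the rejection mass, which the paper leaves implicit.
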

\begin{proof}
By definition, $K_{\bmtheta}(\bfx,A) \geq (1-\alpha) K_{0,\bmomega}(\bfx,A) \geq (1-\alpha) S_{0,\bmomega}(\bfx,A)$ 
for every $\bmtheta=(\bmomega,\bmphi)$. For any $\bfx\in \calX_i$ and $\bfy\in \calX_j$, 
\begin{eqnarray*}
S_{0,\bmomega}(\bfx,d\bfy)  & = & q(\bfx,d\bfy) \min\left[1,e^{\omega_i-\omega_j} \frac{p(\bfy)q(\bfy,\bfx)}{p(\bfx)q(\bfx,\bfy)}\right] \\
 					& \geq &c_1 q(\bfx,d\bfy) \min\left[1, \frac{p(\bfy)q(\bfy,\bfx)}{p(\bfx)q(\bfx,\bfy)}\right] = c_1S_0(\bfx,d\bfy).
\end{eqnarray*}
Thus, $K_{\bmtheta}(\bfx,A) \geq (1-\alpha) c_1 S_0(\bfx,A)$ for all $A \in \calB(\calX_j)$.
\end{proof}

\begin{theorem}\label{thm:convergence}
If {\em (C1)--(C4)} hold, then 
$d(\bmtheta^n,\Theta^*)\toas 0$ and
\begin{equation}\label{eq:SLLN}
\frac{1}{n}\sum_{t=1}^n h(\bfX^t) \toas \mu_h{(\bmomega^*)}, \mbox{ as $n\to \infty$}.
\end{equation}
\end{theorem}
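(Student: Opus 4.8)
The plan is to recognize Algorithm~\ref{alg:damcmc} as a stochastic approximation recursion $\bmtheta^{t+1}=\bmtheta^t+\gamma_t\bfH(\bmtheta^t,\bfX^{t+1})$ driven by a controlled Markov chain whose associated mean field is $\bfF$, and then to invoke a general almost-sure convergence theorem for such recursions (in the spirit of Andrieu and Moulines (2006) and Atchad\'e and Liu (2010)). Three ingredients must be supplied: (i) uniform ergodicity and regularity of the parameter-indexed kernels $\{K_{\bmtheta}\}$, which controls the Markovian noise through a Poisson equation; (ii) a Lyapunov function for the mean-field ODE $\dot{\bmtheta}=\bfF(\bmtheta)$ that makes $\Theta^*$ its asymptotically stable set; and (iii) the standard summability of the step sizes $\{\gamma_t\}$. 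Boundedness of the iterates is handled by the reinitialization device in step~1 of Algorithm~\ref{alg:damcmc} together with the boundedness of $\Theta$ in (C1).

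For (i), I would first note that because $\Theta$ is bounded, $e^{\omega_i-\omega_j}\ge c_1$ for a uniform constant $c_1\in(0,1]$, so Lemma~\ref{lemma1} gives $K_{\bmtheta}(\bfx,A)\ge(1-\alpha)c_1 S_0(\bfx,A)$ for all $\bmtheta\in\Theta$ and, after partitioning $A$ over the blocks $\{\calX_i\}$, for all $A\in\calB(\calX)$. Combined with (C2) and the $\ell$-step minorization (C3), this yields a Doeblin-type bound $K_{\bmtheta}^{\ell}(\bfx,A)\ge \delta'\nu(A)$ uniform in $\bmtheta$, hence uniform geometric ergodicity of $\{K_{\bmtheta}\}$ with invariant law $p_{\bmomega}$. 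Smoothness of $\bmomega\mapsto p_{\bmomega}$ together with (C4) then gives Lipschitz dependence of $K_{\bmtheta}$ on $\bmtheta$, so the Poisson equation $\hat{\bfH}_{\bmtheta}-K_{\bmtheta}\hat{\bfH}_{\bmtheta}=\bfH(\bmtheta,\cdot)-\bfF(\bmtheta)$ admits a solution with sup-norm and Lipschitz constants uniform over $\Theta$. This is the technical heart of the noise control.

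For (ii), the $\bmomega$-block of $\bfF$ admits the Lyapunov function $w(\bmomega)=\log Z(\bmomega)$: since $\partial_{\omega_j}w=-Z_j(\omega_j)/Z(\bmomega)$, one computes along the flow $\langle\nabla_{\bmomega}w,\bfF_{\bmomega}\rangle = 1/\kappa-\sum_j\big(Z_j(\omega_j)/Z(\bmomega)\big)^2\le 0$, with equality (by Cauchy--Schwarz applied to $\sum_j Z_j(\omega_j)/Z(\bmomega)=1$) precisely when all the masses equal $1/\kappa$, i.e. on the solution orbit. The $\bmphi$-block is a linear contraction toward $\bmmu_{\bfg,U}(\bmomega)$ with a strictly positive coefficient, so the triangular (cascade) structure lets me combine $w$ with a quadratic term $\tfrac12|\bmphi-\bmmu_{\bfg,U}(\bmomega)|^2$ into a joint Lyapunov function whose stable set is exactly $\Theta^*$; because $\bfF$ is translation-invariant in $\bmomega$, convergence is necessarily to the one-dimensional orbit $\Theta^*$ rather than to a single point.

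For (iii) and the final assembly, I would argue that the MWL rule (Routine~\ref{MWLdesign}) drives $\gamma_t$ below $\epsilon_{\gamma}$ in almost surely finite time $t_c$: uniform ergodicity forces every $\calX_i$ to be visited with asymptotic frequency $1/\kappa$, so the flat-histogram test is met infinitely often and $\gamma_t$ is shrunk repeatedly. For $t>t_c$ the rule fixes $\gamma_t=1/(t+\xi)$, whence $\sum_t\gamma_t=\infty$ and $\sum_t\gamma_t^2<\infty$. The convergence theorem then yields $d(\bmtheta^n,\Theta^*)\toas 0$. For the law of large numbers \eqref{eq:SLLN}, I would write $\tfrac1n\sum_t[h(\bfX^t)-\mu_h(\bmomega^*)]$ and apply the Poisson decomposition for $K_{\bmtheta^*}$ to split it into a martingale term, which vanishes by the uniform bound on the Poisson solution and a martingale strong law, and a telescoping remainder controlled by the Lipschitz continuity of that solution together with $\bmtheta^t\to\Theta^*$; since $\mu_h(\bmomega^n)\to\mu_h(\bmomega^*)$ by continuity of $\bmomega\mapsto p_{\bmomega}$, both terms tend to zero. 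The main obstacle will be (i)--(ii) jointly: obtaining the uniform-in-$\bmtheta$ Poisson bounds for the doubly adaptive family and reconciling them with step sizes $\gamma_t$ that are themselves random through the MWL mechanism, so that the summability conditions, and hence the averaging out of the Markovian noise, hold almost surely rather than by assumption.
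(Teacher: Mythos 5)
Your overall architecture coincides with the paper's: show the MWL step sizes become deterministic after an a.s.\ finite time $t_c$, then treat Algorithm~\ref{alg:damcmc} as a stochastic approximation recursion and invoke the theory of Andrieu et al.\ (2005) / Atchad\'e and Liu (2010), verifying uniform minorization and regularity of $\{K_{\bmtheta}\}$ via Lemma~\ref{lemma1} and (C2)--(C4), a global Lyapunov function for $\bfF$, and a Poisson-equation argument for the ergodic average. Your ingredient (i) matches the paper's drift-condition verification (DRI1--DRI3) essentially step for step. Your SLLN argument is a genuinely different (and valid) decomposition: you use the Poisson solution for the \emph{limiting} kernel --- well defined because $K_{\bmtheta}$ is constant along the orbit $\Theta^*$ --- plus a martingale strong law and a telescoping remainder, whereas the paper uses the Poisson solutions at the running parameters $\bmtheta^t$, shows $\sum_t t^{-1}[h(\bfX^{t+1})-\mu_h(\bmomega^t)]$ converges a.s., and finishes with Kronecker's lemma. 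Your version is cleaner once $d(\bmtheta^t,\Theta^*)\toas 0$ is in hand, which is the order you propose.

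Two steps need repair. First, your Lyapunov function $w(\bmomega)=\log Z(\bmomega)$: the computation $\langle\nabla_{\bmomega}w,\bfF_{\bmomega}\rangle = 1/\kappa-\sum_j (Z_j(\omega_j)/Z(\bmomega))^2\leq 0$, with equality exactly on $\{\bfF_{\bmomega}=\mathbf{0}\}$, is correct; but $w$ is \emph{not} constant on the solution set, since $\log Z(\bmomega+\beta)=\log Z(\bmomega)-\beta$. Hence $L(\Theta^*)$ is a nondegenerate interval whenever $\Theta^*$ is the one-dimensional orbit you yourself describe, and this violates exactly the condition --- checked explicitly in the paper for its Lyapunov function $L(\bmtheta)=\frac{c_4}{2}\sum_i(Z_i(\omega_i)-\bar{Z}(\bmomega))^2+\frac{1}{2}|\bmphi-\bmmu_{\bfg,U}(\bmomega)|^2$ --- that the closure of $L(\Theta^*)$ have empty interior, which is what the stochastic approximation theorem needs to conclude convergence to the stationary set. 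The fix is easy: use the translation-invariant $w'(\bmomega)=\log Z(\bmomega)+\kappa^{-1}\sum_i\omega_i$, for which $\nabla w'=-\bfF_{\bmomega}$ and $\langle\nabla w',\bfF_{\bmomega}\rangle=-|\bfF_{\bmomega}|^2$. Relatedly, ``the cascade structure lets me combine'' hides real work: the cross term $(\bmphi-\bmmu_{\bfg,U})^{\mathsf{T}}\,\partial_{\bmomega}\bmmu_{\bfg,U}\,\dot{\bmomega}$ is sign-indefinite and must be dominated by weighting the $\bmomega$-part with a sufficiently large constant (the paper's $c_4$), using the bounds $Z_i(\omega_i)\geq\epsilon_{\Omega}>0$ and $|\partial\bmmu_{\bfg,U}/\partial\omega_u|<\infty$ that follow from (C1).

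Second, your justification that $t_c<\infty$ a.s.\ --- ``uniform ergodicity forces every $\calX_i$ to be visited with asymptotic frequency $1/\kappa$'' --- is not a proof and misattributes the mechanism. Before $t_c$ the chain is adaptive with non-vanishing adaptation, so ergodicity of the frozen kernels does not by itself equalize visit frequencies; what does is the Wang--Landau reweighting. The paper's route is to combine Lemma~\ref{lemma1} and (C3) into a minorization valid whenever $e^{\omega_i-\omega_j}\geq c_1$ and then cite Theorem 4.2 of Atchad\'e and Liu (2010), which yields $\max_{i,j}\limsup_n |v^n_i-v^n_j|<\infty$ a.s.; bounded visit-count differences with $\bar{c}^{\,t}\to\infty$ force the flat-histogram test to pass infinitely often, hence $t_c<\infty$. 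You correctly flag the randomness of $\gamma_t$ as the main obstacle, but closing it requires this (or an equivalent) result rather than the heuristic you give.
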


\begin{proof}
Lemma~\ref{lemma1} and (C3) implies that $\forall \bfx\in \calX_i$, 
$K^{\ell}_{\bmtheta}(\bfx,\calX_j) \geq \epsilon_{\nu} \nu(\calX_j)>0$
if $e^{\omega_i-\omega_j}\geq c_1$, where $\epsilon_{\nu}>0$. By Theorem 4.2 of Atchad\'e and Liu (2010), 
\begin{equation*}
\max_{i,j} \limsup_{n\to \infty} |v^n_i - v^n_j | < \infty, \;a.s.,
\end{equation*}
where $v^n_i = \sum_{t=1}^n \I (\bfX^t \in \calX_i)$. This
implies that the $\{\gamma_t\}$ defined by the MWL update (Routine~\ref{MWLdesign}) will
decrease below any given $\epsilon_{\gamma}>0$ after a finite number of iterations, i.e., $t_c<\infty$, 
almost surely.
Then, Algorithm~\ref{alg:damcmc} becomes a stochastic
approximation algorithm with a deterministic sequence of $\{\gamma_t\}$. According to Proposition 6.1 and
Theorem 5.5 in Andrieu et al. (2005), we only need to verify the drift conditions (DRI1-3) and assumptions (A1, A4)
given in that paper to show the convergence of $\bmtheta^n$.

{\em Verifying the drift conditions}.
Let $\calD$ be any compact subset of $\Theta$, and $\calD_1$ and $\calD_2$ be the projections
of $\calD$ into $\Omega$ and $\Phi$, respectively. Since $\calD_1$ is compact,
there is an $\epsilon_{\calD}\in (0,1]$ such that 
$\min_{i,j} \inf_{\bmomega \in \calD_1} e^{\omega_i-\omega_j} \geq \epsilon_{\calD}$.
By Lemma~\ref{lemma1} and (C3), there is a $\delta_{\calD}>0$ such that
\begin{equation}\label{eq:Kminor}
\inf_{\bmtheta \in \calD} K_{\bmtheta}^{\ell}(\bfx,A) \geq  \delta_{\calD} \nu(A),\, \forall \bfx \in \calX, A \in \calB(\calX),
\end{equation}
where $\ell$ and $\nu$ are defined in (C3). This gives the minorization condition in (DRI1).
Given (C2) and that $p_{\bmomega}, \bmomega \in \calD_1$, is bounded away from 0 and  $\infty$ under (C1), $K_{0,\bmomega}$
is irreducible and aperiodic for every $\bmomega \in \calD_1$, according to Theorem 2.2 of Roberts and Tweedie (1996). 
Consequently, for every $\bmtheta \in \calD$, $K_{\bmtheta}$ is also irreducible and aperiodic as $\alpha <1$.
Let $V(\bfx)=1$ for all $\bfx\in \calX$.
It is then easy to verify other conditions in (DRI1).

Since both $\bfg$ and $\Theta$ are bounded (C1), there is a $c_2>0$ such that for all $\bfx\in\calX$,
\begin{equation}\label{eq:supH}
\sup_{\bmtheta \in \Theta} \left|\bfH(\bmtheta,\bfx) \right| \leq \kappa + |\bfg(\bfx)| + \sup_{\bmphi\in\Phi}|\bmphi| \leq c_2
\end{equation}
\begin{equation}
\left|\bfH(\bmtheta,\bfx)-\bfH(\bmtheta',\bfx) \right|\leq |\bmphi-\bmphi'| \leq c_2 |\bmtheta-\bmtheta'|, \forall\, \bmtheta,\bmtheta' \in \Theta.
\end{equation}
These two inequalities imply (DRI2) with $V(\bfx)=1$.

Condition (DRI3) can be verified by the same argument used in Liang et al. (2007) once we
find a constant $c_3>0$ such that 
\begin{equation}\label{eq:spartial}
\left|\frac{\partial S_{\bmtheta}(\bfx,\bfy)}{\partial \theta_i} \right| \leq c_3 Q_{\bmphi}(\bfx,\bfy), 
\end{equation}
for all $\bfx,\bfy \in \calX$, $\bmtheta\in\calD$ and all $i$, where $\theta_i$ is the $i$th component
of $\bmtheta=(\bmomega,\bmphi)$. 
Denote by $\phi_j$ the $j$th component of $\bmphi$. Straightforward calculation leads to
$\left|{\partial S_{\bmtheta}(\bfx,\bfy)}/{\partial \omega_i} \right| \leq Q_{\bmphi}(\bfx,\bfy)$ and
\begin{equation*}
\frac{\partial S_{\bmtheta}(\bfx,\bfy)}{\partial \phi_j} =
\left\{\begin{array}{ll}
R_{\bmtheta}(\bfx,\bfy) \left[{\partial \log t_{\bmphi}(\bfy,\bfx)}/{\partial \phi_j}\right] \alpha t_{\bmphi}(\bfx,\bfy) , & \mbox{ if } R_{\bmtheta}(\bfx,\bfy) < 1 \\
\left[{\partial \log t_{\bmphi}(\bfx,\bfy)}/{\partial \phi_j}\right] \alpha t_{\bmphi}(\bfx,\bfy) , & \mbox{ otherwise,}
\end{array}\right.
\end{equation*}
where $R_{\bmtheta}(\bfx,\bfy) = p_{\bmomega}(\bfy)t_{\bmphi}(\bfy,\bfx)/p_{\bmomega}(\bfx)t_{\bmphi}(\bfx,\bfy)$. 
Condition (C4) with the compactness of $\calD_2$ and $\calX$ guarantees that
\begin{equation*}
\sup_{\bfx,\bfy\in \calX}\sup_{\bmphi \in \calD_2}\left|{\partial \log t_{\bmphi}(\bfx,\bfy)}/{\partial \phi_j}\right|<\infty.
\end{equation*} 
As $ \alpha t_{\bmphi}(\bfx,\bfy) \leq Q_{\bmphi}(\bfx,\bfy)$, \eqref{eq:spartial} holds and (DRI3) is verified.

{\em Verifying assumptions} (A1, A4). 
It is assumed in assumption (A1) the existence of a global Lyapunov function for $\bfF(\bmtheta)$.
Let $L(\bmtheta)=\frac{c_4}{2}\sum_{i=1}^{\kappa} (Z_i(\omega_i)-\bar{Z}(\bmomega))^2 + \frac{1}{2}|\bmphi-\bmmu_{\bfg,U}(\bmomega)|^2$, 
where $\bar{Z}(\bmomega)=Z(\bmomega)/\kappa$. Using straightforward algebra one can show that
\begin{equation*}
-Z \langle\nabla L, \bfF \rangle = 
\sum_{i=1}^{\kappa} c_4 Z_i\cdot(\Delta Z_i)^2 + 
(\Delta \bmphi)^{\mathsf{T}}\sum_{u\in U}\Delta Z_u \frac{\partial\bmmu_{\bfg,U}(\bmomega)}{\partial \omega_u}
+ \sum_{u\in U}Z_u |\Delta \bmphi|^2,
\end{equation*}
where $\Delta Z_i=Z_i(\omega_i)-\bar{Z}(\bmomega)$, $\Delta \bmphi=\bmphi-\bmmu_{\bfg,U}(\bmomega)$,
and the arguments ($\omega_i$ and $\bmomega$) in $Z_i(\omega_i)$ and $Z(\bmomega)$ have been dropped.
Since $\Theta$ is bounded, $Z_i(\omega_i) > \epsilon_{\Omega}>0$ for all $i$. Because $\bfg$ is $p$-integrable,
$\bmmu_{\bfg,i} \defi \bmmu_{\bfg,\{i\}}(\mathbf{0})$ is bounded for all $i$ and
\begin{equation*}
\left|\frac{\partial\bmmu_{\bfg,U}(\bmomega)}{\partial \omega_u} \right| \leq
|\bmmu_{\bfg,u}| + \max_{i\in U}|\bmmu_{\bfg,i}| \leq 2 \max_{1\leq i \leq \kappa} |\bmmu_{\bfg,i}|<\infty.
\end{equation*}
Thus, choosing a sufficiently large $c_4$ ensures that 
$\langle\nabla L(\bmtheta), \bfF(\bmtheta) \rangle \leq 0$ for any $\bmtheta \in \Theta$
with equality if and only if $\bmtheta\in\Theta^*$. 
Furthermore, $\{\bmtheta \in \Theta: L(\bmtheta)\leq C_L\}$ is compact for some $C_L >0$
and the closure of $L(\Theta^*)$ has an empty interior. Thus, all the conditions in assumption (A1) are satisfied. 
Since $|\bmtheta^{t+1}-\bmtheta^t|\leq \gamma_t \sup_{\bmtheta,\bfx} |\bfH(\bmtheta,\bfx)|\leq c_2\gamma_t$ \eqref{eq:supH} and
$\gamma_t = 1/(t+\xi)$ for $t>t_c$, verifying assumption (A4) is immediate.
This completes the proof of the convergence of $\bmtheta^n$.

The result \eqref{eq:SLLN} can be established similarly as the proof of Proposition 6.2 in 
Atchad\'e and Liu (2010). We only give an outline here. 
The drift conditions imply that for any $\bmtheta \in \calD$, there exist $h_{\bmtheta}(\bfx)$,
$c_5>0$, and $b \in (0,1]$
such that $h_{\bmtheta} - K_{\bmtheta} h_{\bmtheta} = h- \mu_h(\bmomega)$ and 
\begin{eqnarray*}
& \sup_{\bmtheta \in \calD} (\|h_{\bmtheta} \| + \| K_{\bmtheta}h_{\bmtheta}\|) < \infty,   \\
 & \|h_{\bmtheta}-h_{\bmtheta'}\|+\|K_{\bmtheta} h_{\bmtheta}-K_{\bmtheta'} h_{\bmtheta'}\|< c_5 |\bmtheta-\bmtheta'|^{b}, 
\forall\, \bmtheta,\bmtheta'\in \calD,
\end{eqnarray*}
where $K_{\bmtheta} h_{\bmtheta}(\bfx) = \int_{\calX} K_{\bmtheta}(\bfx,d\bfy) h_{\bmtheta}(\bfy)$ and 
for $f:\calX\to \R$, $\| f\| = \sup_{\bfx\in\calX} |f(\bfx)|$. See Proposition 6.1 and assumption (A3) of Andrieu et al. (2005).
Then, following an essentially identical proof to that of Lemma 6.6 in Atchad\'e and Liu (2010), we can show that
$\sum_{t=1}^{\infty}t^{-1}[h(\bfX^{t+1})-\mu_h(\bmomega^t)]$ has a finite limit almost surely.
Since $\mu_h(\bmomega^t) \toas \mu_h(\bmomega^*)$ as $t\to \infty$,
Kronecker's lemma applied to the above infinite sum leads to the desired result.

\end{proof}

\singlespacing

\end{document}